\documentclass[11pt]{article}

\usepackage[hyphens]{url}
\usepackage[pagebackref,colorlinks]{hyperref}
\usepackage[hyphenbreaks]{breakurl}
\usepackage{amsmath} 
\usepackage{amsthm} 
\usepackage{amssymb}	
\usepackage{graphicx} 
\usepackage{multicol} 
\usepackage{multirow}
\usepackage{color}
\usepackage[dvips,letterpaper,margin=1in,bottom=1in]{geometry}
\usepackage[capitalize,noabbrev]{cleveref}

\usepackage{bm}
\usepackage{bbm}

\usepackage{ytableau}
\usepackage{longtable}

\usepackage[utf8]{inputenc}
\usepackage[english]{babel}

\usepackage[T1]{fontenc}
\AtBeginDocument{%
  \DeclareFontShape{T1}{cmr}{m}{scit}{<->ssub*cmr/m/sc}{}%
}

\usepackage{diagbox}
\usepackage{mathtools}
\allowdisplaybreaks

\newtheorem{theorem}{Theorem}[section]

\newtheorem{lemma}[theorem]{Lemma}

\newtheorem{fact}[theorem]{Fact}

\newtheorem{problem}{Problem}

\DeclarePairedDelimiter\rbra{\lparen}{\rparen}
\DeclarePairedDelimiter\sbra{\lbrack}{\rbrack}
\DeclarePairedDelimiter\cbra{\{}{\}}
\DeclarePairedDelimiter\abs{\lvert}{\rvert}
\DeclarePairedDelimiter\Abs{\lVert}{\rVert}
\DeclarePairedDelimiter\ceil{\lceil}{\rceil}
\DeclarePairedDelimiter\floor{\lfloor}{\rfloor}

\DeclareMathOperator*{\E}{\mathbb{E}}
\DeclareMathOperator*{\Var}{\mathbb{V}\!\!\;ar}

\newcommand{\supp} {\operatorname{supp}}

\usepackage{latexsym}
\usepackage{CJK}

\usepackage{enumerate}
\usepackage{enumitem}

\usepackage{algorithm}
\usepackage{algpseudocode}

\usepackage{stmaryrd}
\usepackage{tabularx}
\usepackage{booktabs}
\usepackage{adjustbox}

\newcommand{\footremember}[2]{%
    \footnote{#2}
    \newcounter{#1}
    \setcounter{#1}{\value{footnote}}%
}

\usepackage{tikz}

\usetikzlibrary{quantikz2}

\begin{document}

\title{Optimal Support Size Testing}
\author{Qisheng Wang \footremember{1}{Qisheng Wang is with the School of Computer Science, Shanghai Jiao Tong University (e-mail: \url{QishengWang1994@gmail.com}).}}
\date{}

\maketitle

\begin{abstract}

    We present an optimal tester for support size that determines whether an unknown probability distribution is supported on $n$ elements or $\varepsilon$-far away (in total variation distance) with sample complexity $\Theta(\frac{n}{\varepsilon\log(n)})$,
    improving the prior best $O(\frac{n}{\varepsilon\log(n)} \log(\min\{\frac{1}{\varepsilon}, n\}))$ due to \hyperlink{cite.FPJH25}{Ferreira Pinto Jr.\ and Harms (STOC 2025)}.
    As a generalization, we provide a \textit{pretty} good lower bound on support size. 
\end{abstract}



\section{Introduction}

The support size of an unknown discrete probability distribution is an important measure of species diversity in ecology \cite{FCW43}. 
The estimation of support size has been subsequently investigated in statistics from a theoretical perspective \cite{Goo49,Goo53} (cf.\ \cite{BF93}). 
Since ``I don't believe it is usually possible to estimate the number of unseen species [i.e., support size]'' (quoting I.\ J.\ Good in \cite{BF93}), the best we can hope is to consider weaker tasks with promise rather than estimation in all cases. 
Among them, support size testing is a basic statistical decision problem in property testing (cf.\ \cite[Section 11.4]{Gol17}). 

\begin{problem}[Support size testing]
    Given an unknown probability distribution $p$, the task is to determine whether $p$ is supported on at most $n$ elements or $\varepsilon$-far in total variation distance from any distribution supported on at most $n$ elements. 
    Formally, determine whether 
    \[
    \textup{Yes: } \abs{\supp\rbra{p}} \leq n \qquad \text{or} \qquad \textup{No: } \Delta_n\rbra{p} \geq \varepsilon,
    \]
    with the promise that it is in either case. 
    Here, $\supp\rbra{p} = \cbra{i \colon p_i > 0}$ denotes the set of elements with nonzero probability, and $\Delta_n\rbra{p} = \inf_{q \colon \abs{\supp\rbra{q}} \leq n} d_{\textup{TV}}\rbra{p, q}$ denotes the total variation distance of $p$ to distributions supported on at most $n$ elements, where $d_{\textup{TV}}\rbra{p, q} = \frac{1}{2} \sum_i \abs{p_i - q_i}$. 
\end{problem}

This problem has been investigated in a series of works \cite{GR22,AF24,AFL24,KLR25,FPJH25}.
In particular, the prior best approach is with sample complexity $O\rbra{\frac{n}{\varepsilon\log\rbra{n}}\log\rbra{\min\cbra{\frac{1}{\varepsilon}, n}}}$ due to \cite{FPJH25}, improving the prior approaches via learning histograms \cite{VV11,VV17,HJW18,HO19} with sample complexity $O\rbra{\frac{n}{\varepsilon^2\log\rbra{n}}}$.
However, the prior best lower bound is $\Omega\rbra{\frac{n}{\varepsilon \log\rbra{n}}}$ implied by \cite[Theorem 2]{VV11} and \cite[Proposition 1]{WY19}. 
There still remains a gap of $\log\rbra{\min\cbra{\frac{1}{\varepsilon}, n}}$ between the upper and lower bounds. 
In this paper, we present an optimal support size tester, thereby closing this gap. 

\begin{theorem} [Optimal support size testing] \label{thm:support-size-testing}
    There is an algorithm for support size testing with optimal sample complexity $\Theta\rbra{\frac{n}{\varepsilon\log\rbra{n}}}$.
\end{theorem}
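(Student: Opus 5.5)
The lower bound $\Omega(\frac{n}{\varepsilon\log(n)})$ is already known from \cite[Theorem 2]{VV11} and \cite[Proposition 1]{WY19}, so the work is the matching upper bound. I would build a statistic from polynomial approximation, in the style of Wu--Yang support size estimation, but tuned for testing. The aim is to separate the two cases by an additive margin of order $\varepsilon n$ rather than to estimate $\abs{\supp(p)}$.

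First, reduce to a clean structure. Use Poissonized sampling with $m = C\frac{n}{\varepsilon\log n}$ samples, so the counts $N_i \sim \mathrm{Poi}(mp_i)$ are independent. In the No case the distance satisfies $\Delta_n(p) = \sum_{i>n} p_{(i)} \ge \varepsilon$, where $p_{(1)}\ge p_{(2)}\ge\cdots$ are the sorted probabilities. Hence $p$ has at least $n$ elements beyond the top $n$, all with probability at most $1/n$, and these carry total mass at least $\varepsilon$.

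Next, define the statistic $T = \sum_i g(N_i)$, where
\[
\E[g(N)] = 1 - P_L(mp) \quad \text{for } N\sim \mathrm{Poi}(mp),
\]
and $P_L$ is a polynomial of degree $L = c\log n$. I choose $P_L$ with $P_L(0)=1$ and $P_L(x)$ close to $0$ on $[x_0, \Lambda]$, with $\Lambda \approx \log n$ and $x_0 \approx m/n \cdot$ const. For $g(j) = 1$ when $j > L$, the expectation $\E[T]$ approximately counts $\sum_i \mathbf 1[p_i>0]$. The key is to exploit one-sided, weighted error instead of uniform Chebyshev error. In the Yes case it suffices that $\E[T] \le n + \text{slack}$, so $1-P_L(x) \le 1$ on all relevant $x$ (one-sided). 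In the No case the elements of small mass must contribute at least roughly proportionally to their mass: $1-P_L(x) \gtrsim \min(1, x/x_0)$ on $[0,x_0]$. The No contribution is then at least about $n + \sum_{\text{tail}} \min(1, mp_i/x_0)$, and this exceeds $n$ by $\Omega(\varepsilon m/x_0)\cdot$ (scale). I would pick $x_0 = m/n$ so that the gain is $\Omega(\varepsilon n)$. A polynomial satisfying $1-P_L \le 1$ on $[0,\Lambda]$, $P_L(0)=1$, and $1-P_L(x)\ge x/x_0$ near $0$ exists by Chebyshev/Markov-type arguments exactly when $L^2 x_0 \gtrsim \Lambda$, i.e.\ $\frac{m}{n}\log^2 n \gtrsim \log n$. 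This requires only $m \gtrsim n/\log n$, and the $\varepsilon$ factor enters through the required margin against the variance.

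Finally, control the variance. The coefficients of $g$ grow like $e^{O(L)}$, which is $n^{O(c)}$, so each term has variance $n^{O(c)}\cdot\min(1,mp_i)$. Choosing $c$ small gives total standard deviation $O(\sqrt{n^{1+\delta}\cdot \text{stuff}})$, which should be $\ll \varepsilon n$ when $m = C n/(\varepsilon\log n)$. Elements with large $mp_i$ (above $\Lambda$) are handled exactly by the counting branch $N_i>L$, up to Poisson tails. This is the main obstacle: the prior $\log(1/\varepsilon)$ loss comes from bucketing the scales of $\varepsilon$, and I expect the tight step to be showing that one polynomial with one-sided error simultaneously gives bias $\le$ slack in the Yes case and gain $\Omega(\varepsilon n)$ in the No case with variance $o((\varepsilon n)^2)$. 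If that fails for small $\varepsilon$, I would first try choosing the approximation interval adaptively to $\varepsilon$ (taking $x_0\propto \varepsilon^{-1}$-scaled) and verifying that the coefficient growth bound still holds. For the regime $\varepsilon < n^{-\Omega(1)}$, the empirical support count $O(n/\varepsilon)$ would be used as a fallback; it is then within the target bound up to constants.
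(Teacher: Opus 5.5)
Your lower-bound citation is fine, but the upper-bound construction breaks at its central step. The saturation threshold $x_0 = m/n$ (probability scale $1/n$) gives no separation at all. Since $\min\{1, np_i\} \le np_i$, you have $\sum_i \min\{1, mp_i/x_0\} = \sum_i \min\{1, np_i\} \le n$ for \emph{every} distribution, so your No-case lower bound on $\E[T]$ never exceeds the Yes-case bound $n$. Your accounting $n + \sum_{\textup{tail}} \min(1, mp_i/x_0)$ tacitly assumes the top $n$ elements each contribute about $1$. But in the No case $p_{(n)} \le (1-\varepsilon)/n < 1/n$, so they never saturate; also, the tail need only contain about $\varepsilon n$ elements, not $n$. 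The threshold has to sit at probability scale $\varepsilon/n$ (the paper's $a_0 = \varepsilon/(8n)$). That is where $\varepsilon$ genuinely enters, not merely through the variance margin.

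Once the threshold is there, one polynomial with uniform error is too costly. The No-case margin can be as small as about $\max\{\varepsilon, 1/\log^2 n\}\cdot n$; for example, when $\varepsilon \gtrsim 1/\log^2 n$, compare uniform on $n$ elements with uniform on about $n/(1-\varepsilon)$ elements. The Chebyshev/Markov construction you invoke would then need uniform error of that order on $[\varepsilon/n, b]$ with $b \asymp \log(n)/m$, i.e.\ degree $\asymp \sqrt{bn/\varepsilon}\,\log\min\{1/\varepsilon, \log n\} \asymp \log(n)\log\min\{1/\varepsilon, \log n\}$. The $e^{O(\textup{degree})}$ coefficient growth then costs a non-constant factor, which is exactly the source of the earlier $\log(1/\varepsilon)$-type loss. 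Making the error one-sided, or choosing $x_0$ adaptively, does not remove this trade-off.

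The idea you are missing is that the needed accuracy depends on the scale. If $a \le p_{(n)} < 2a$, the tail alone contributes $\Delta_n(p)/p_{(n)} \ge \varepsilon/(2a)$, so error $\delta \asymp \varepsilon/(an)$ suffices at that scale, and the degree $\sqrt{b/a}\,\log(1/\delta)$ is then $O(\log n)$ for every $a \ge \varepsilon/n$. The paper therefore runs a geometric family $a_j = 2^{j-3}\varepsilon/n$, $\delta_j = 2^{-j-2}$, $k_j = \ceil{\sqrt{b/a_j}\log(2/\delta_j)}$ for $0 \le j \le J \asymp \log\log n$. Since $2^{-j/2}(j+3)$ is bounded, all $k_j < L/64$, so every sub-tester has $\Abs{f_{a_j,b,k_j,m}}_\infty \le 4n^{1/16}$, and all of them concentrate within $\delta_j n$ simultaneously. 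The two-sided bias $\E[\widehat{S}^{\textup{Cheb}}_j] \le (1+\delta_j)\abs{\supp(p)}$ is neutralized by outputting $\max_j (\widehat{S}^{\textup{Cheb}}_j - \delta_j n)/(1+\delta_j)$. This is at most $\abs{\supp(p)}$ in the Yes case, while the index with $a_j \le p_{(n)} < 2a_j$ has margin $\varepsilon/(2a_j) = 16\delta_j n$.

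Two further steps of your plan also fail as stated.
\begin{itemize}
\item Your variance bound $n^{O(c)}\sum_i \min(1, mp_i)$ can be of order $n^{O(c)} m$, far above $n^{1+O(c)}$ when $\varepsilon$ is small. The paper first disposes of the case $\E[\widehat{S}^{\textup{MLE}}] > 2n$, where the plain count $\widehat{S}^{\textup{MLE}}$ already exceeds $n$ with high probability.
\item Your small-$\varepsilon$ fallback is not within constants: the empirical support count needs $\Theta(n/\varepsilon)$ samples, a $\log n$ factor too many.
\end{itemize}
No fallback is needed. When $p_{(n)} < b/2$, the margin exceeds $\varepsilon/b \asymp n/\log^2 n$, far above the $O(n^{9/16})$ fluctuations. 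When $p_{(n)} \ge b/2$, the plain count (which is part of the paper's maximum) sees all top-$n$ elements and some tail element with high probability, since $ne^{-mb/2}$ and $e^{-m\varepsilon}$ are tiny.
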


\subsection{Generalization: Pretty good lower bound on support size} \label{sec:pretty}

As a generalization of this result, we can further give a \textit{pretty} good lower bound on the support size, without any assumptions on the distribution. 

\begin{theorem}[Pretty good lower bound on support size] \label{thm:pretty}
    There is an algorithm that uses $\Theta\rbra{\frac{n}{\varepsilon\log\rbra{n}}}$ samples drawn from an unknown distribution $p$ and outputs a number $\widehat{S} \in \cbra{1, 2, \dots, n}$ such that with probability at least $2/3$, 
    \begin{equation} \label{eq:def-hat-S}
    \min\cbra{\operatorname{eff}_{\varepsilon}\rbra{p}, n} \leq \widehat{S} \leq \abs{\supp\rbra{p}},
    \end{equation}
    where $\operatorname{eff}_{\varepsilon}\rbra{p}$ is the $\varepsilon$-effective support size of $p$,\footnote{The effective support size was recently involved in property testing \cite{CDS18,BCG19}, and its estimation (in a query model) was studied in \cite{Gol19,NT23,Gol25}. } the smallest number $k \in \mathbb{N}$ such that there is a distribution $q$ with $\abs{\supp\rbra{q}} = k$ and $d_{\textup{TV}}\rbra{p, q} \leq \varepsilon$. 
\end{theorem}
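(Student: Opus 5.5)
We reduce \cref{thm:pretty} to the testing routine behind \cref{thm:support-size-testing}. The feature we need is one-sidedness. We run that tester in an ``estimator'' form: from samples it computes a statistic $T_m$ for each threshold $m$. We then arrange, by taking a suitable quantile or confidence bound, that the following two properties hold simultaneously for all thresholds $m \in \cbra{1,\dots,n}$ with probability at least $2/3$.
\begin{enumerate}[label=(\roman*)]
    \item \emph{Soundness.} If $\abs{\supp\rbra{p}} \le m$, the tester accepts at level $m$. Contrapositively, if it rejects at level $m$, then certainly $\abs{\supp\rbra{p}} > m$.
    \item \emph{Completeness.} If $\Delta_m\rbra{p} \ge \varepsilon$, the tester rejects at level $m$.
\end{enumerate}
We then output $\widehat{S} = 1 + \max\cbra{m < n : \text{the tester rejects at level } m}$, or $1$ if it never rejects.

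Given these properties, \eqref{eq:def-hat-S} follows directly.
\begin{itemize}
    \item \emph{Upper bound.} Rejection at level $\widehat{S}-1$ certifies $\abs{\supp\rbra{p}} \ge \widehat{S}$.
    \item \emph{Lower bound.} Let $k = \min\cbra{\operatorname{eff}_\varepsilon\rbra{p}, n}$. For every $m < k$ we have $\Delta_m\rbra{p} > \varepsilon$, since otherwise $\operatorname{eff}_\varepsilon\rbra{p} \le m$. So the tester rejects at level $k-1$, giving $\widehat{S} \ge k$.
\end{itemize}
The delicate point is the exact boundary case $\Delta_m\rbra{p} = \varepsilon$ in the definition of $\operatorname{eff}_\varepsilon$. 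We absorb it by running the tester with $\varepsilon/2$ in place of $\varepsilon$, and likewise absorb the constant-factor change in $n$.

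The main obstacle is getting all $n$ thresholds at once without paying a $\log n$ union-bound factor. The naive fix, repeating the tester $O\rbra{\log n}$ times, would destroy optimality. We avoid this in two steps.

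\emph{Step 1: reduce to geometrically spaced thresholds.} The tester's sample complexity $\frac{m}{\varepsilon\log m}$ is monotone in $m$, and accepting at scale $m$ relative to $\Delta_{2m}$ is also monotone. So it suffices to handle the geometric thresholds $m_j = 2^j$ for $j \le \log n$, and then binary-refine within the final dyadic block at a constant-factor loss. This loss is permitted, since $\widehat S$ is only required to be "pretty good", up to the $\varepsilon$ slack. If exact thresholds are needed, we instead use the single-statistic structure: the tester estimates a functional of the form
\[
\sum_i f\rbra{N_i}
\]
approximating $\sum_i \min\cbra{1, p_i/\tau}$-type counts (unseen-element polynomial estimator), with a one-sided correction. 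The decision at level $m$ is a comparison of $m$-dependent statistics, and these are all computed from one sample set of size $\Theta\rbra{\frac{n}{\varepsilon\log n}}$.

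\emph{Step 2: geometric error budget.} We allocate failure probability $\delta_j$ with $\sum_j \delta_j \le 1/3$, for instance $\delta_j \propto 2^{-(\log n - j)/2}$. The sample cost of amplifying level $j$ is then
\[
\frac{2^j}{\varepsilon j}\log\frac{1}{\delta_j},
\]
and summing over $j$ gives a geometric sum dominated by the top level, for a total of $O\rbra{\frac{n}{\varepsilon\log n}}$. The lower bound $\Omega\rbra{\frac{n}{\varepsilon\log n}}$ is inherited from testing, since an algorithm for \cref{thm:pretty} decides support size testing by checking whether $\widehat S = n$ against $|\supp(p)|\le n-1$, after a reparametrization of $n$. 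I expect Step 2 to be the crux: one must verify that the tester's error probability decays exponentially in the sample multiplier, so that median amplification costs only $\log(1/\delta_j)$. One must also check that soundness, the one-sided acceptance of support-$\le m$ distributions, survives under reuse of samples across levels.
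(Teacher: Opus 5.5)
\paragraph{Verdict: there is a genuine gap.} Your reduction from simultaneous per-level correctness to the sandwich is logically fine. The way you propose to obtain simultaneous correctness does not work.

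\emph{Step 1 relies on a false premise.} It claims that a constant-factor loss inside the final dyadic block is ``permitted''. It is not. The guarantee $\min\{\operatorname{eff}_\varepsilon(p),n\}\le\widehat S\le|\supp(p)|$ is exact on both sides, with no multiplicative slack; removing even the $(1+\varepsilon)$ factor of prior work is the point of the theorem.

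\emph{Step 2 does not cover all levels.} Geometric thresholds $2^j$ with amplification control only $O(\log n)$ levels. Refining inside the last block by binary search needs about $\log n$ further tests at levels of order $n$, at adaptively chosen points. That reintroduces the logarithmic overhead you set out to avoid.

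\emph{Your fallback does not help either.} Reusing one sample set for ``$m$-dependent statistics'' does not make a union bound over distinct failure events free. If the statistic changes with $m$, there is no monotonicity across levels. Finally, the yes/no guarantee of the level-$n$ tester, used as a black box, says nothing about levels $r<n$.

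\paragraph{What is missing.} You need a single scalar statistic $\widetilde S$, built once with the level-$n$ parameters, with two one-sided guarantees:
\begin{itemize}
\item $\Pr[\widetilde S>|\supp(p)|]<0.01$ whenever $|\supp(p)|\le n$. This is a bound by the true support size, not merely by $n$.
\item For every $1\le r\le n$ with $\Delta_r(p)\ge\varepsilon$, $\Pr[\widetilde S\le r]<0.01$.
\end{itemize}

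\paragraph{How the paper obtains these guarantees.} It uses the shifted and rescaled Chebyshev sub-estimators $(\widehat S^{\textup{Cheb}}_j-\delta_j n)/(1+\delta_j)$ at scales $a_j=2^{j-3}\varepsilon/n$, for $0\le j\le J=O(\log\log n)$, maxed together with $\widehat S^{\textup{MLE}}$.
\begin{itemize}
\item For each $r$, the index $j$ with $a_j\le p_{(r)}<2a_j$ does the work.
\item The case $p_{(r)}<a_0$ is handled by $j=0$.
\item The cases $p_{(r)}\ge b/2$ and $\E[\widehat S^{\textup{MLE}}]>2n$ are handled by the MLE.
\end{itemize}
The only union bound over scales is over these $J+1$ sub-estimators. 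It is paid for by Chebyshev's inequality with $\delta_j=2^{-j-2}$.

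\paragraph{Why the main obstacle disappears.} Once such an $\widetilde S$ exists, set $\widehat S=\min\{\max\{\lceil\widetilde S\rceil,1\},n\}$. Rejection is then monotone in the level, so you never need all thresholds at once.
\begin{itemize}
\item Lower bound: with $k=\min\{\operatorname{eff}_\varepsilon(p),n\}$, the event $\widehat S\ge k$ is the single event $\widetilde S>k-1$. Apply the second guarantee once with $r=k-1$, using $\Delta_{k-1}(p)>\varepsilon$ (the case $k=1$ is trivial).
\item Upper bound: $\widehat S\le|\supp(p)|$ follows from one application of the first guarantee, and holds trivially if $|\supp(p)|\ge n$.
\end{itemize}
Two events, not $n$, so no geometric error budget, amplification, or binary refinement is needed.
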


If an estimate $\widehat{S}$ satisfies \cref{eq:def-hat-S}, then we call it a \textit{pretty good lower bound on support size}, which is a stronger definition than the \textit{good lower bound on support size} $\min\cbra{\operatorname{eff}_{\varepsilon}\rbra{p}, n} \leq \widehat{S} \leq \rbra{1+\varepsilon}\abs{\supp\rbra{p}}$ defined in \cite[Corollary 1.6]{FPJH25}, where we remove the factor $\rbra{1+\varepsilon}$. 
\cref{thm:pretty} can also be interpreted as: using $m$ samples, one can obtain a lower bound on support size at least $\min\cbra{\operatorname{eff}_\varepsilon\rbra{p}, \Theta\rbra{\varepsilon m \log \rbra{e + \varepsilon m}}}$, which improves the lower bound $\min\cbra{\operatorname{eff}_\varepsilon\rbra{p}, \Theta\rbra{\frac{\varepsilon}{\log\rbra{1/\varepsilon}}m\log\rbra{m}}}$ in \cite{FPJH25}. 

The algorithm in \cref{thm:pretty} is also optimal since a pretty good lower bound on support size can solve the support size testing problem: $\abs{\supp\rbra{p}} \leq \floor{n/2}$ vs $\Delta_{\floor{n/2}}\rbra{p} \geq 2\varepsilon$,\footnote{If $\abs{\supp\rbra{p}} \leq \floor{n/2}$, then $\widehat{S} \leq \floor{n/2}$. If $\Delta_{\floor{n/2}}\rbra{p} \geq 2\varepsilon$, then $\operatorname{eff}_\varepsilon\rbra{p} > \floor{n/2}$, which gives $\widehat{S} \geq \floor{n/2}+1$.} and thus the lower bound $\Omega\rbra{\frac{n}{\varepsilon\log\rbra{n}}}$ for support size testing \cite{VV11,WY19} applies. 

\subsection{Implications in distribution-free sample-based testing}

In \cite{FPJH25}, they provided an algorithm for testing the class $\mathcal{H}_n$ of Boolean functions $f \colon \mathbb{N} \to \cbra{0, 1}$ with $\abs{f^{-1}\rbra{1}} \leq n$ with sample complexity $O\rbra{\frac{n\log\rbra{1/\varepsilon}}{\varepsilon\log\rbra{n}}}$ in the distribution-free sample-based testing model \cite{GR16,Gol17,RR20,BFH21,RR22}. 
Testing Boolean function classes can aid the proper PAC (probably approximately correct) learning in model selection when testing is faster than learning. 
As an implication of our results, combining \cite[Theorem 1.8]{FPJH25}, we settle the sample complexity of testing $\mathcal{H}_n$.

\begin{theorem}[Optimal distribution-free sample-based testing for $\mathcal{H}_n$] \label{thm:Hn}
    There is an algorithm that uses $\Theta\rbra{\frac{n}{\varepsilon\log\rbra{n}}}$ samples of the form $\rbra{x, f\rbra{x}}$ with $x \sim p$ for an unknown function $f \colon \mathbb{N} \to \cbra{0, 1}$ and an unknown probability distribution $p$, and determines whether $f \in \mathcal{H}_n$ or $f$ is $\varepsilon$-far from $\mathcal{H}_n$ (i.e., $\inf_{g \in \mathcal{H}_n} \Pr_{x \sim p}\sbra{f\rbra{x} \neq g\rbra{x}} \geq \varepsilon$) with success probability at least $2/3$.
\end{theorem}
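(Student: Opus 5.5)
The plan is to reduce \cref{thm:Hn} to \cref{thm:pretty} via the reduction in \cite[Theorem 1.8]{FPJH25}, and then handle the matching lower bound separately.

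\textbf{Step 1 (from labeled samples to a conditional distribution).} Let $w = \Pr_{x \sim p}\sbra{f\rbra{x} = 1}$, and let $p_1$ be the distribution $p$ conditioned on $f^{-1}\rbra{1}$. Then $f$ is $\varepsilon$-far from $\mathcal{H}_n$ exactly when the mass of $p$ on $f^{-1}\rbra{1}$ outside its $n$ heaviest elements is at least $\varepsilon$. Equivalently,
\[
w \cdot \Delta_n\rbra{p_1} \geq \varepsilon.
\]
Conversely, $f \in \mathcal{H}_n$ forces $\abs{\supp\rbra{p_1}} \leq n$.

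\textbf{Step 2 (the tester).} Draw $O\rbra{\frac{n}{\varepsilon\log\rbra{n}}}$ labeled samples and keep those with label $1$. In the far case $w \geq \varepsilon$ and $\Delta_n\rbra{p_1} \geq \varepsilon/w$. To avoid paying $1/\varepsilon$ twice, I would not test $p_1$ at accuracy $\varepsilon$. Instead, I would guess $w$ up to a factor of $2$ by a dyadic search, or by the empirical estimate, which needs only $O\rbra{1/\varepsilon}$ samples. About $wm$ samples of $p_1$ are then available. Running the support size tester of \cref{thm:support-size-testing} at accuracy $\varepsilon' = \varepsilon/w$ needs
\[
O\rbra*{\frac{n}{\varepsilon'\log\rbra{n}}} = O\rbra*{\frac{nw}{\varepsilon\log\rbra{n}}}
\]
samples of $p_1$. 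This costs $O\rbra{\frac{n}{\varepsilon\log\rbra{n}}}$ samples of $p$ in total.

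\textbf{Main obstacle.} The subtle point is the cases where $w$ is small, or where the estimate of $w$ is off. Also, \cref{thm:support-size-testing} is a tester for a fixed accuracy, while the effective accuracy varies. This is why I would use \cref{thm:pretty} instead: it gives a single one-sided statistic $\widehat{S} \leq \abs{\supp\rbra{p_1}}$, which never rejects a Yes instance. The rule is to reject if and only if $\widehat{S} > n$, after running \cref{thm:pretty} with parameters $\rbra{n+1, \varepsilon/w}$. In the Yes case this gives perfect soundness whatever $w$ is. In the No case, $\operatorname{eff}_{\varepsilon/w}\rbra{p_1} > n$, so $\widehat{S} \geq n+1$ with probability $2/3$. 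The concentration of the number of label-1 samples is handled by Chernoff bounds, using $wm \geq \Omega\rbra{n/\log n}$ once $w \geq \varepsilon$. This is precisely the combination with \cite[Theorem 1.8]{FPJH25}, which turns any pretty good lower bound with sample size $m$ into a tester for $\mathcal{H}_n$. Plugging in the improved bound $\min\cbra{\operatorname{eff}_\varepsilon, \Theta\rbra{\varepsilon m\log\rbra{e+\varepsilon m}}}$ yields $m = O\rbra{\frac{n}{\varepsilon\log n}}$.

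\textbf{Lower bound.} For the matching lower bound, embed support size testing. Take $f \equiv 1$, so the labels carry no information, and let $p$ be the unknown distribution. Then $f$ is $\varepsilon$-far from $\mathcal{H}_n$ if and only if $\Delta_n\rbra{p} \geq \varepsilon$, and $f \in \mathcal{H}_n$ corresponds to $\abs{\supp\rbra{p}} \leq n$. Hence the $\Omega\rbra{\frac{n}{\varepsilon\log\rbra{n}}}$ bound of \cite{VV11,WY19} transfers directly.
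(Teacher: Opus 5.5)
Your upper bound takes a different, self-contained route. The paper's proof is a one-line citation. It uses \cite[Theorem 1.8]{FPJH25}, $m_{\mathcal{H}_n}\rbra{n,\varepsilon}\le m_{\textup{supp}}\rbra{n,\varepsilon}+O\rbra{1/\varepsilon}$, together with \cref{thm:support-size-testing}. Its lower bound is the VC bound $\Omega\rbra{\frac{\mathsf{VC}}{\varepsilon\log\rbra{\mathsf{VC}}}}$ of \cite{BFH21} with $\mathsf{VC}\rbra{\mathcal{H}_n}=n$.

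You instead prove the reduction yourself. The distance to $\mathcal{H}_n$ equals $w\,\Delta_n\rbra{p_1}$, so $w\ge\varepsilon$ in the far case. You estimate $w$ with $O\rbra{1/\varepsilon}$ samples, then test $p_1$ at accuracy $\Theta\rbra{\varepsilon/w}$ on the roughly $wm$ label-$1$ samples. You get the lower bound by embedding support size testing rather than invoking VC dimension. This buys a self-contained argument showing the two problems are equivalent up to constants; the paper's route buys brevity.

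Two framing remarks. The result of \cite{FPJH25} is stated for a support size tester, not a pretty good lower bound. Also, the detour through \cref{thm:pretty} is unnecessary. Once $\hat w$ is fixed, the Yes condition $\abs{\supp\rbra{p_1}}\le n$ does not depend on the accuracy parameter, so \cref{thm:support-size-testing} at that fixed accuracy suffices.

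Some steps need repair.

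(i) In the lower bound, $f\equiv 1$ is never in $\mathcal{H}_n$, because $f^{-1}\rbra{1}=\mathbb{N}$. Yes instances of support size testing therefore become distance-$0$ instances outside the promise, where an $\mathcal{H}_n$-tester may answer arbitrarily. Take $f_p=\mathbf{1}_{\supp\rbra{p}}$ instead. The tester still sees only pairs $\rbra{x,1}$ with $x\sim p$, but now $f_p\in\mathcal{H}_n$ iff $\abs{\supp\rbra{p}}\le n$, and $f_p$ is at distance $\Delta_n\rbra{p}$ from $\mathcal{H}_n$.

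(ii) \cref{thm:pretty} is not perfectly one-sided. The bound $\widehat{S}\le\abs{\supp\rbra{p}}$ holds only with probability at least $2/3$, since the Chebyshev sub-estimators can overshoot. This is harmless for a single run with an empirical $\hat w$. It would break a dyadic search that runs the tester once per guess and rejects if any run rejects, unless each run is amplified.

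(iii) You cannot run with the unknown $\varepsilon/w$. Even if you could, $\Delta_n\rbra{p_1}\ge\varepsilon/w$ does not imply $\operatorname{eff}_{\varepsilon/w}\rbra{p_1}>n$ at equality, because the best $n$-supported approximation attains distance exactly $\Delta_n\rbra{p_1}$. With $\hat w\in\sbra{w/2,2w}$, use accuracy $\varepsilon/\rbra{4\hat w}$, which is strictly below $\Delta_n\rbra{p_1}$. This still needs only $O\rbra{\frac{nw}{\varepsilon\log\rbra{n}}}$ samples of $p_1$.

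(iv) Make the Yes-case rule explicit: accept if $\hat w<\varepsilon/2$ or if there are fewer label-$1$ samples than the run requires. This is safe because in the No case $w\ge\varepsilon$, so $\hat w$ is accurate and Chernoff supplies enough samples with high probability.

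With these fixes your argument is complete.
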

\begin{proof}
    \cite[Theorem 1.8]{FPJH25} gives $m_{\mathcal{H}_n}\rbra{n, \varepsilon} \leq m_{\textup{supp}}\rbra{n, \varepsilon} + O\rbra{1/\varepsilon}$ where $m_{\mathcal{H}_n}\rbra{n, \varepsilon}$ is the sample complexity of testing $\mathcal{H}_n$ and $m_{\textup{supp}}\rbra{n, \varepsilon}$ is the sample complexity of support size testing.
\end{proof}

The upper bound in \cref{thm:Hn} matches the lower bound $\Omega\rbra{\frac{\mathsf{VC}}{\varepsilon\log\rbra{\mathsf{VC}}}}$ due to \cite{BFH21} (cf.\ \cite[Section 1.3]{FPJH25}), where the VC (Vapnik--Chervonenkis) dimension of $\mathcal{H}_n$ is $\mathsf{VC}\rbra{\mathcal{H}_n} = n$. 

\subsection{Techniques}

We now explain the intuition of our tester.
Given the inspiration from prior approaches, we use a new strategy to solve an even harder problem: \textit{pretty good lower bound on support size}.
During directly facing this harder problem, it suggests new forms of testers and better parameter choices. 
Since our tester itself (as well as its correctness proof) is very \textit{simple}, for the readers who are interested in the specific construction, skip to the last paragraph of this subsection and \cref{sec:tester}. 

Like prior works \cite{JVHW15,WY16} on property testing, we employ the Poissonized sampling model: Let $M \sim \operatorname{Poi}\rbra{m}$, draw $M$ independent samples $X_1, X_2, \dots, X_M$ from $p$, and the algorithm takes the empirical counts $N_i = \sum_{j} \mathbf{1}_{\cbra{X_j=i}}$ as input. 
The sample complexity in the Poissonized sampling model is essentially the same as that in the standard (multinomial) sampling model up to a constant factor; namely, the algorithm takes independent samples $X_1, X_2, \dots, X_m \sim p$ as input.

\paragraph{The approach of \cite{FPJH25}.}
In \cite{FPJH25}, they first provided a tester for support size and then used it as a subroutine to give a good lower bound on support size. 
Their tester is of the form 
\[
\widehat{S}^{\textup{FH}} \coloneqq \widehat{S}_{a,b,k,m} = \sum_i f_{a,b,k,m}\rbra{N_i},
\]
with $a \asymp \frac{\varepsilon}{n}\log\rbra{\frac{1}{\varepsilon}}$, $b \asymp \frac{\varepsilon \log^2\rbra{n}}{n\log\rbra{1/\varepsilon}}$, $k \asymp \log\rbra{n}$, and $m \asymp \frac{n\log\rbra{1/\varepsilon}}{\varepsilon\log\rbra{n}}$, 
where
\begin{align}
    P_{a,b,k}\rbra{x} & = -\rho T_k\rbra*{\frac{b+a-2x}{b-a}} = \sum_{\ell=0}^k c_\ell x^\ell, \qquad \rho = \rbra*{T_k\rbra*{\frac{b+a}{b-a}}}^{-1} \in (0, 1], \label{eq:def-Pabk} \\
    f_{a,b,k,m}\rbra{\ell} & = \begin{cases}
        1 + c_\ell \ell! / m^{\ell}, & 1 \leq \ell \leq k, \\
        1, & \ell > k, \\
        0, & \ell = 0,
    \end{cases} \qquad T_k\rbra{x} = \begin{cases}
    1, & k = 0, \\
    x, & k = 1, \\
    2x T_{k-1}\rbra{x} - T_{k-2}\rbra{x}, & k \geq 2. \label{eq:def-fabkm}
\end{cases}
\end{align}
Their tester returns ``Yes'' if $\widehat{S}^{\textup{FH}} < \rbra{1+\frac{\varepsilon}{2}}n$ and ``No'' otherwise.
Here, $T_k\rbra{x}$ is the $k$-th Chebyshev polynomial of the first kind. 
It is noted that $P_{a,b,k}\rbra{\cdot}$ and $f_{a,b,k,m}\rbra{\cdot}$ were also used in \cite{WY19} for support size estimation.

\paragraph{New strategy for testing.}
Rather than directly focusing on support size testing, we aim to deal with giving a pretty good lower bound on support size as well. 
Namely, our goal is to directly find a pretty good lower bound $\widehat{S}$ on support size such that $\min\cbra{\operatorname{eff}_{\varepsilon}\rbra{p}, n} \leq \widehat{S} \leq \abs{\supp\rbra{p}}$. 
If this can be done, it already gives a support size tester, as explained in \cref{sec:pretty}. 
So, in the following, we skip directly to solving this harder task and we will explain why the harder task will give hints to an optimal support size tester. 

\paragraph{Step 1: Removing the multiplicative factor $\rbra{1+\varepsilon}$.}
The reason that the good lower bound on support size obtained in \cite{FPJH25} has an upper bound of $\rbra{1+\varepsilon}\abs{\supp\rbra{p}}$ with a multiplicative factor $\rbra{1+\varepsilon}$ is that if $\abs{\supp\rbra{p}} \leq n$, the best possible upper bound on $\E\sbra{\widehat{S}^{\textup{FH}}}$ we can have is already $\E\sbra{\widehat{S}^{\textup{FH}}} \leq \rbra{1+\delta} \abs{\supp\rbra{p}}$ where $\delta = \varepsilon/4$ (see \cite[Lemma 3.7]{FPJH25}). 
On the other hand, $\abs{\widehat{S}^{\textup{FH}} - \E\sbra{\widehat{S}^{\textup{FH}}}} \leq \delta n$ with high probability (see \cite[Lemma 3.17]{FPJH25}). 
The two inequalities inspire us to consider the case where there is a tester $\widehat{S}$ such that with high probability,
\[
\E\sbra{\widehat{S}} \leq \rbra{1 + \delta} \abs{\supp\rbra{p}}, \qquad \abs*{\widehat{S} - \E\sbra{\widehat{S}}} \leq \delta n,
\]
where $\delta = \varepsilon/4$ reduces to the case of \cite{FPJH25}.
The two conditions together yield 
\[
\widehat{S} \leq \rbra{1+\delta} \abs{\supp\rbra{p}} + \delta n,
\]
which is 
\[
\frac{\widehat{S} - \delta n}{1 + \delta} \leq \abs{\supp\rbra{p}}. 
\]
This suggests we use a tester of the form
\[
\widehat{S}_{a,b,k,m,\delta}' = \frac{\widehat{S}_{a,b,k,m} - \delta n}{1 + \delta},
\]
which guarantees $\widehat{S}_{a,b,k,m,\delta}' \leq \abs{\supp\rbra{p}}$ with high probability.
Since our goal is the sample complexity $m \asymp \frac{n}{\varepsilon\log\rbra{n}}$, it remains to determine the rest parameters. 

\paragraph{Step 2: The choice of parameter $a$ leads to multiple sub-testers.}
We now consider how to choose $a$ so that $\widehat{S}'_{a,b,k,m,\delta} \geq \operatorname{eff}_\varepsilon\rbra{p}$ when $\operatorname{eff}_\varepsilon\rbra{p} \leq n$. 
In other words, for every $1 \leq r \leq n$, if $\operatorname{eff}_\varepsilon\rbra{p} > r$, then $\widehat{S}'_{a,b,k,m,\delta} > r$. 
Note that under some smooth conditions, one has $\E_{X \sim \operatorname{Poi}\rbra{mx}} \sbra{f_{a,b,k,m}\rbra{X}} \geq \rbra{1-\delta_a} \min\cbra{x/a, 1}$ (see \cref{lemma:poly}),
where $\delta_a > 0$ depends on $a$. 
When $\operatorname{eff}_\varepsilon\rbra{p} > r$, this further leads to 
\[
\widehat{S}'_{a,b,k,m,\delta_a} - r \geq \frac{\rbra{1-\delta_a}\varepsilon/p_{\rbra{r}} - 2\delta_a \rbra{n+r}}{1+\delta_a},
\]
where $p_{\rbra{r}}$ denotes the $r$-th largest probability weight of $p$. 
So, in order to ensure $\widehat{S}'_{a,b,k,m,\delta_a} > r$, it suffices to ensure $\rbra{1-\delta_a}\varepsilon/p_{\rbra{r}} > 2\delta_a \rbra{n+r}$, which, for small $\delta_a \leq 1/4$, is 
\begin{equation} \label{eq:key-cond}
\delta_a n \lesssim \frac{\varepsilon}{p_{\rbra{r}}}. 
\end{equation}

However, $p_{\rbra{r}}$ is not known yet. 
To overcome this issue, we consider a list of candidates of $a$, namely $a_0, a_1, \dots, a_J$ with $a_j = 2^j a_0$. 
For large enough $J$, there must be a $0 \leq j < J$ such that $a_j \leq p_{\rbra{r}} < a_{j+1} = 2a_j$, in which case $p_{\rbra{r}} \asymp a_j$ and one can take $\delta_{a_j} \asymp \frac{\varepsilon}{a_jn}$ to meet \cref{eq:key-cond}. 
To complete this, since smaller $a$ gives larger $\delta_a$ while we require in our analysis that $\delta_a \leq 1/4$, the smallest possible $a$ (i.e., $a_0$) we can choose is therefore $a_0 \asymp \frac{\varepsilon}{\delta_{a_0} n}$, which is $a_0 \asymp \frac{\varepsilon}{n}$ when $\delta_a \asymp 1$. 
As a result, we can take $a_j \asymp \frac{2^j \varepsilon}{n}$ and $\delta_j \asymp 2^{-j}$ and these suggest a tester of the form 
\[
\widehat{S}' = \max_{0 \leq j \leq J} \widehat{S}'_{a_j, b_j, k_j, m, \delta_j}.
\]
For the rest of the parameters, $k_j \asymp \sqrt{\frac{b_j}{a_j}}\log\rbra{\frac{1}{\delta_j}} \lesssim \log\rbra{n}$ is chosen to ensure that $\rho = \rbra{T_{k_j}\rbra{\frac{b_j+a_j}{b_j-a_j}}}^{-1} < \delta_{j}$, and $b_j \asymp \frac{\log\rbra{n}}{m} \asymp \frac{\varepsilon\log^2\rbra{n}}{n}$ is chosen to ensure that every element with top-$n$ probability is seen with high probability. 
As a result, $J \asymp \log\rbra{\frac{b}{a_0}} \asymp \log\rbra{\log\rbra{n}}$. 

\paragraph{Step 3: Maximum likelihood estimation strikes back.}
However, the tester $\widehat{S}'$ may fail in some extreme cases.
For example, when $p_1 = p_2 = \dots = p_{n-1} = \frac{1-\varepsilon}{n-1}$ and $p_{n} = \varepsilon$ with $\varepsilon = \frac{1}{4n}$, $\Delta_{n-1}\rbra{p} = \varepsilon$ and $\operatorname{eff}_\varepsilon\rbra{p} = \abs{\supp\rbra{p}} = n$, which forces the pretty good lower bound on the support size of $p$ to be exactly $\widehat{S} = n$. 
On the other hand, with high probability, $\widehat{S}_{a_j,b_j,k_j,m} = n$, which gives $\widehat{S}'_{a_j, b_j, k_j, m, \delta_j} \leq n-1$ as long as $\delta_j n \gtrsim 1$ and therefore $\widehat{S}' \leq n-1$. 

To overcome this issue, we employ the simple maximum likelihood estimator 
\begin{equation} \label{eq:def-MLE}
\widehat{S}^{\textup{MLE}} = \sum_i \mathbf{1}_{\cbra{N_i > 0}},
\end{equation}
and finally our tester is now completed:
\[
\widehat{S}^{\textup{our}} = \max\cbra*{ \widehat{S}^{\textup{MLE}}, \widehat{S}' }.
\]
It only remains to verify that the tester $\widehat{S}^{\textup{our}}$ can give a pretty good lower bound on support size and thus a support size tester. 

\paragraph{Our tester: A formal description.}

\[
\widehat{S}^{\textup{our}} = \max\cbra*{ \widehat{S}^{\textup{MLE}}, \max_{0 \leq j \leq J} \frac{\widehat{S}^{\textup{Cheb}}_j - \delta_j n}{1+\delta_j} }, \qquad \widehat{S}^{\textup{Cheb}}_j = \widehat{S}_{a_j, b_j, k_j, m},
\]
where $m \asymp \frac{n}{\varepsilon\log\rbra{n}}$ and $a_j \asymp \frac{2^j \varepsilon}{n}$, $b_j \asymp \frac{\varepsilon \log^2\rbra{n}}{n}$, $k_j \lesssim \log\rbra{n}$, $\delta_j \asymp 2^{-j}$, and $J \asymp \log\rbra{\log\rbra{n}}$.
Note that the choice of $b_j$ is independent of $j$. 
With the choice of these parameters, 
\begin{enumerate}
    \item $\widehat{S}^{\textup{our}}$ gives a pretty good lower bound on support size. 
    \item The support size tester is immediate: return ``Yes'' if $\widehat{S}^{\textup{our}} \leq n$ and ``No'' otherwise. 
\end{enumerate}

\subsection{Independent and Simultaneous Work}

During the preparation of this paper, the author became aware of the independent work \cite{BGY26} with an optimal support size tester over a finite alphabet.
Although both approaches employ polynomial estimators, their algorithmic construction is substantially different. 
Specifically, they first amplify the samples by taking unions, construct an initial set of observed elements, and then reduce the problem to testing bounded residual sets, involving the Fej\'er polynomial. 
In contrast, our approach directly deals with the empirical counts using sub-testers involving affined Chebyshev polynomials. 
Furthermore, our approach yields a pretty good lower bound on support size, providing a numerical guarantee beyond support size testing. 

\section{Preliminaries}

\begin{fact}[{\cite[Exercise 4.7]{MU05}}]
    For $M \sim \operatorname{Poi}\rbra{m}$, then $\Pr\sbra{M > 4m} \leq e^{-m}$. 
\end{fact}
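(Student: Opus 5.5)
To prove that $\Pr\sbra{M > 4m} \leq e^{-m}$ for $M \sim \operatorname{Poi}\rbra{m}$, I will use the standard Chernoff (exponential moment) method. The moment generating function of a Poisson variable is $\E\sbra{e^{tM}} = e^{m\rbra{e^t - 1}}$ for every real $t$. This follows directly from summing the series $\sum_{k\geq 0} e^{-m} m^k e^{tk}/k! = e^{-m} e^{m e^t}$. For any $t > 0$, Markov's inequality applied to $e^{tM}$ then gives
\[
\Pr\sbra{M > 4m} \leq \Pr\sbra{M \geq 4m} \leq e^{-4mt}\,\E\sbra{e^{tM}} = \exp\rbra*{m\rbra{e^t - 1 - 4t}}.
\]

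Next I optimize over $t$. Minimizing $e^t - 1 - 4t$ gives $e^t = 4$, that is, $t = \ln 4$. The exponent then becomes
\[
m\rbra{4 - 1 - 4\ln 4} = m\rbra{3 - 8\ln 2}.
\]
It remains to check that $3 - 8\ln 2 \leq -1$, which is equivalent to $\ln 2 \geq 1/2$. This holds because $\ln 2 \approx 0.693$, or equivalently because $e^{1/2} < 2$ since $e < 4$. Hence $\Pr\sbra{M > 4m} \leq e^{\rbra{3-8\ln 2}m} \leq e^{-m}$, as required.

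The only step needing care is this final numerical comparison, and it has plenty of slack: the bound actually obtained is about $e^{-2.54m}$. An alternative route is the general Poisson tail bound $\Pr\sbra{M \geq x} \leq e^{-m}\rbra{em/x}^x$ at $x = 4m$. It gives $e^{-m}\rbra{e/4}^{4m} \leq e^{-m}$ directly, since $e/4 < 1$.
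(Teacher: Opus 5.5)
Your argument is correct. The paper gives no proof of its own and only cites the textbook Chernoff-type Poisson tail bound from Mitzenmacher--Upfal, and your MGF computation is that standard argument: taking $t=\ln 4$ yields $\Pr[M>4m]\le e^{(3-8\ln 2)m}\le e^{-m}$ because $e<4$, and your alternative via $e^{-m}(em/x)^x$ at $x=4m$ gives the same bound in closed form.
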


\begin{fact} \label{lemma:mle}
    Let independent $N_i \sim \operatorname{Poi}\rbra{mp_i}$. 
    Define $\widehat{S}^{\textup{MLE}}$ as in \cref{eq:def-MLE}. 
    Then, 
    \begin{equation} \label{eq:basic}
    \Var\sbra*{\widehat{S}^{\textup{MLE}}} \leq \E\sbra*{\widehat{S}^{\textup{MLE}}} = \sum_{i} \rbra*{1 - e^{-mp_i}} \leq \abs{\supp\rbra{p}}. 
    \end{equation}
    If $\Delta_r\rbra{p} \geq \varepsilon$, then 
    \begin{equation} \label{eq:prob-mle}
    \Pr\sbra*{ \widehat{S}^{\textup{MLE}} \leq r } \leq \frac{r+1}{m\varepsilon}.
    \end{equation}
\end{fact}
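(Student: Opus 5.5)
We prove the two parts of \cref{lemma:mle} separately. The first part uses only the independence of the indicators. The second part combines a lower bound on the expectation with Chebyshev's inequality.

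\textbf{Moments.} Since the $N_i$ are independent, the indicators $Y_i = \mathbf{1}_{\cbra{N_i>0}}$ are independent Bernoulli variables with parameter $q_i = \Pr[N_i>0] = 1-e^{-mp_i}$. Hence
\[
\E\sbra*{\widehat{S}^{\textup{MLE}}} = \sum_i (1-e^{-mp_i}), \qquad \Var\sbra*{\widehat{S}^{\textup{MLE}}} = \sum_i q_i(1-q_i) \leq \sum_i q_i .
\]
Every term with $p_i = 0$ vanishes, and every other term is at most $1$. Therefore the sum is at most $\abs{\supp\rbra{p}}$, which gives \cref{eq:basic}.

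\textbf{Tail bound.} Assume $\Delta_r\rbra{p} \geq \varepsilon$ and order the probabilities as $p_{(1)} \geq p_{(2)} \geq \cdots$. Taking $q$ to be $p$ restricted to its $r$ largest atoms, with the removed mass moved onto one of those atoms, shows that $\Delta_r\rbra{p} \leq \sum_{i>r} p_{(i)}$. Thus the tail mass satisfies $T \coloneqq \sum_{i>r} p_{(i)} \geq \varepsilon$.

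Next we lower bound the expected number of distinct elements seen from the tail. Using $1-e^{-x} \geq x/(1+x)$, we get
\[
\E\sbra*{\widehat{S}^{\textup{MLE}}} \geq r' + \sum_{i>r} \frac{mp_{(i)}}{1+mp_{(i)}},
\]
where $r'$ accounts for the head terms. It is cleaner, however, to apply Chebyshev with the variance bound $\Var \leq \mu \coloneqq \E\sbra*{\widehat{S}^{\textup{MLE}}}$. If $\mu > r$, then
\[
\Pr\sbra*{\widehat{S}^{\textup{MLE}} \leq r} \leq \frac{\mu}{(\mu-r)^2}.
\]

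\textbf{Main obstacle.} The difficulty is to show that $\mu-r$ is large enough. We want $\mu/(\mu-r)^2 \leq (r+1)/(m\varepsilon)$. Let $s$ be the number of nonzero atoms, and use $\mu \geq \sum_i \min\cbra{mp_i,1}(1-e^{-1})$.

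Consider first the case where at least $r+1$ atoms have $mp_i \geq 1$. Then the complement of the event is a union of independent events. Indeed, $\Pr[\widehat{S}^{\textup{MLE}} \leq r]$ is at most the probability that one of these $r+1$ atoms is unseen, which by the union bound is at most $\sum e^{-mp_i}$. I would try to show directly that this is at most $(r+1)/(m\varepsilon)$, using $e^{-x} \leq 1/x$ together with $p_i \geq p_{(r+1)}$ and the tail condition.

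A simpler route may be the following combinatorial bound. The event $\widehat{S}^{\textup{MLE}} \leq r$ implies that the observed set $A$ has $\abs{A} \leq r$, so the distribution restricted to $A$ misses mass at least $\varepsilon$. Since no observed element lies outside $A$, sample elements never fall in $p(\bar A) \geq \varepsilon$. Combined with a union bound over the ``first $r+1$ distinct'' structure, this should give $\Pr \leq (r+1)/(m\varepsilon)$ through the following sequential argument.

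Reveal the samples in Poisson time. Split the time into $r+1$ phases: phase $t$ ends when the $t$-th new distinct element appears. While fewer than $r+1$ distinct elements have been seen, the unseen mass is at least $\varepsilon$. So each phase has duration stochastically dominated by $\operatorname{Exp}(m\varepsilon)$ in rescaled time, where total time $1$ corresponds to $\operatorname{Poi}(m)$ samples. The expected time needed to see $r+1$ distinct elements is therefore at most $(r+1)/(m\varepsilon)$, and Markov's inequality on this time against the horizon $1$ gives exactly \cref{eq:prob-mle}.

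I expect this coupling step to be the crux: making rigorous that the unseen mass stays at least $\varepsilon$ throughout each phase, and the resulting stochastic domination.
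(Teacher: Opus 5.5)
The paper states this as a Fact and gives no proof, so there is no argument to compare against. Your moment computation is correct as written, and your final waiting-time argument for \cref{eq:prob-mle} is also correct. It is the natural proof: the bound $(r+1)/(m\varepsilon)$ is exactly an expected coupon-collector time divided by the time horizon $1$.

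You should delete the intermediate attempts, because neither can be completed.
\begin{itemize}
\item Chebyshev with $\Var \le \mu$ fails in general. Take $r=1$ and masses $1-\varepsilon,\varepsilon$ with $m\varepsilon$ large. Then $\mu \approx 2$, so $\mu/(\mu-r)^2 \approx 2$, while the target is $2/(m\varepsilon)$. Your ``main obstacle'' of making $\mu-r$ large is therefore not achievable.
\item The union bound over the top $r+1$ atoms needs $p_{(r+1)} \gtrsim \varepsilon$. The tail condition does not give this, since the tail mass can be spread over many tiny atoms.
\end{itemize}

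The step you called the crux takes one line. If the seen set $A$ has $\abs{A}\le r$, then $p(A)\le\sum_{i\le r}p_{(i)}$. Hence the unseen mass is at least $\sum_{i>r}p_{(i)}\ge\Delta_r(p)\ge\varepsilon$.

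Two formalities remain.
\begin{itemize}
\item Run the processes on $[0,\infty)$, which leaves the counts on $[0,1]$ unchanged. Then the arrival time $\tau_{r+1}$ of the $(r+1)$-th new element is always defined, and $\cbra{\widehat{S}^{\textup{MLE}}\le r}=\cbra{\tau_{r+1}>1}$.
\item You can avoid the strong Markov property by working with the discrete i.i.d.\ sequence. Let $T$ be the number of draws needed to see $r+1$ distinct values. In each of the $r+1$ phases, every draw is new with probability at least $\varepsilon$ given the past, so each phase is geometric with conditional mean at most $1/\varepsilon$, and $\E\sbra{T}\le(r+1)/\varepsilon$.
\end{itemize}
Now attach independent $\operatorname{Exp}(m)$ interarrival times, which reproduces the Poissonized model. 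Since $T$ is independent of these times, $\E\sbra{\tau_{r+1}}=\E\sbra{T}/m$. Markov's inequality then gives $\Pr\sbra{\tau_{r+1}>1}\le(r+1)/(m\varepsilon)$.
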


\begin{lemma}[Adapted from {\cite[Proposition 2.12, 3.2, 3.4, and 3.14]{FPJH25}}]\label{lemma:poly}
    Let $0 < a < b \leq 1$, $m > 0$ and integer $k \geq 1$. 
    Let $P_{a,b,k}\rbra{x}$ and $f_{a,b,k,m}\rbra{\ell}$ be defined in \cref{eq:def-Pabk,eq:def-fabkm}, respectively. 
    Then, 
    \begin{equation} \label{eq:def-Q}
    Q\rbra{x} \coloneqq \E_{X \sim \operatorname{Poi}\rbra{mx}} \sbra*{f_{a,b,k,m}\rbra{X}} = 1 + e^{-mx} P_{a,b,k}\rbra{x}, \qquad x \geq 0.
    \end{equation}
    If $m\sqrt{b\rbra{b-a}} \geq k$, then 
    \begin{equation} \label{eq:outcond2}
        \rbra{1-\rho} \min\cbra{x/a, 1} \leq Q\rbra{x} \leq 1 + \rho.
    \end{equation}
    If $a \leq b/2$, then for $1 \leq \ell \leq k$, 
    \begin{equation} \label{eq:outcond3}
        \abs{c_\ell} \ell! \leq 15^k \rbra*{\frac{k}{b}}^{\ell}.
    \end{equation}
\end{lemma}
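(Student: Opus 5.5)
The plan is to prove the three claims of the lemma in order: the identity \cref{eq:def-Q}, the two-sided bound \cref{eq:outcond2}, and the coefficient bound \cref{eq:outcond3}. The first two rest on properties of $T_k$ on and off $[-1,1]$. The third is a direct expansion.

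\emph{Identity.} I write $\E_{X\sim\operatorname{Poi}(mx)}[f_{a,b,k,m}(X)]$ as a sum over the Poisson pmf. The terms $\ell\ge 1$ with value $1$ contribute $1-e^{-mx}$. The correction terms contribute
\[
\sum_{\ell=1}^k e^{-mx}\frac{(mx)^\ell}{\ell!}\cdot\frac{c_\ell\ell!}{m^\ell}=e^{-mx}\bigl(P_{a,b,k}(x)-c_0\bigr).
\]
Since $c_0=P_{a,b,k}(0)=-\rho\,T_k\bigl(\tfrac{b+a}{b-a}\bigr)=-1$, this gives $Q(x)=1+e^{-mx}P_{a,b,k}(x)$.

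\emph{Bounds \cref{eq:outcond2}.} Let $y(x)=\frac{b+a-2x}{b-a}$, which is affine and decreasing, with $y(a)=1$ and $y(b)=-1$. I split into three ranges.
\begin{enumerate}
\item For $x\in[a,b]$ we have $|y|\le 1$, so $|P_{a,b,k}(x)|\le\rho$. Hence $1-\rho\le Q(x)\le 1+\rho$, and the required lower bound holds because $\min\{x/a,1\}=1$ here.
\item For $x\in[0,a]$ we have $y\ge 1$, where $T_k$ is positive, increasing and convex. So $P_{a,b,k}=-\rho\,T_k\circ y$ is concave, lies between $-1$ and $0$, and satisfies $P_{a,b,k}(0)=-1$ and $P_{a,b,k}(a)=-\rho$. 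Then $Q\le 1$. Also $Q(x)\ge 1+P_{a,b,k}(x)$, and $1+P_{a,b,k}$ is concave with value $0$ at $x=0$ and value $1-\rho$ at $x=a$. It therefore dominates the chord, giving $Q(x)\ge(1-\rho)x/a$.
\item For $x>b$ we have $|y|>1$ and $P_{a,b,k}$ can have either sign. Both sides of \cref{eq:outcond2} then reduce to showing $e^{-mx}\,|T_k(y(x))|\le T_k\bigl(\tfrac{b+a}{b-a}\bigr)$, i.e.\ $e^{-mx}|P_{a,b,k}(x)|\le\rho$.
\end{enumerate}

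\emph{The range $x>b$.} Here I write $|y|=\cosh t$, so that $|T_k(y)|=\cosh(kt)$. I will use
\[
|y|+\sqrt{y^2-1}=\frac{\bigl(\sqrt{x-a}+\sqrt{x-b}\bigr)^2}{b-a},
\]
which gives $kt\le k\ln\frac{4(x-a)}{b-a}$, together with $\frac{dt}{dx}=\frac{1}{\sqrt{(x-a)(x-b)}}$. The aim is to show that $\log\cosh(kt)-mx$ stays below $\log T_k\bigl(\tfrac{b+a}{b-a}\bigr)$. Near $x=b$ I use $t\approx 0$ and the slack $e^{-mb}$, noting $mb\ge k\sqrt{b/(b-a)}\ge k$. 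For larger $x$ I use the log-derivative bound $k\tanh(kt)/\sqrt{(x-a)(x-b)}\le k/\sqrt{(x-b)(x-a)}$, which the hypothesis $m\sqrt{b(b-a)}\ge k$ controls once $x$ is a constant factor above $b$.

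This is the main obstacle. A crude version, requiring $k\ln\frac{4x}{b-a}\le mx$ at $z=x/\sqrt{b(b-a)}$, loses constants exactly at the boundary $x\approx b=a\ll b$. So I would follow the sharper bookkeeping of \cite[Proposition 3.4]{FPJH25}: the $T_k(\tfrac{b+a}{b-a})$ normalization absorbs the growth near $b$, and the hypothesis handles the tail.

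\emph{Coefficients \cref{eq:outcond3}.} I expand $T_k(y)=\sum_{j\le k}t_jy^j$, using $\sum_j|t_j|\le(1+\sqrt2)^k$, and expand $y^j$ by the binomial theorem. The coefficient of $x^\ell$ in $y^j$ is bounded by $\binom{j}{\ell}\bigl(\frac{b+a}{b-a}\bigr)^{j-\ell}\bigl(\frac{2}{b-a}\bigr)^\ell$. When $a\le b/2$ this is at most $\binom{j}{\ell}3^{j-\ell}(4/b)^\ell$. Using $\rho\le 1$ and $\ell!\le k^\ell$, this gives $|c_\ell|\ell!\le C^k(k/b)^\ell$.

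The remaining task is to tighten the constant to $C=15$. I would do this by summing $\binom{j}{\ell}3^{j-\ell}$ against $|t_j|$ more carefully rather than bounding each factor separately, as in \cite[Proposition 3.14]{FPJH25}.
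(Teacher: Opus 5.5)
Your proofs of the identity \cref{eq:def-Q} and of \cref{eq:outcond2} on $[0,a]$ and $[a,b]$ are correct and coincide with the paper's. The genuine gap is the range $x>b$. You do not prove it, and the reduction you state there is wrong. Since $|P_{a,b,k}(x)|=\rho\,|T_k(y(x))|$ with $\rho=1/T_k\bigl(\tfrac{b+a}{b-a}\bigr)$, the required bound $e^{-mx}|P_{a,b,k}(x)|\le\rho$ is equivalent to $e^{-mx}|T_k(y(x))|\le 1$. It is not equivalent to $e^{-mx}|T_k(y(x))|\le T_k\bigl(\tfrac{b+a}{b-a}\bigr)$, which only yields $e^{-mx}|P_{a,b,k}(x)|\le 1$, i.e.\ $0\le Q(x)\le 2$. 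The factor $\rho$ appears on both sides and cancels, so no normalization can ``absorb the growth near $b$''; your plan targets a strictly weaker inequality. The near-$b$ regime is also exactly where your log-derivative idea is unavailable. As $x\to b^+$, the derivative of $\log\cosh(kt)$ tends to $2k^2/(b-a)$, which exceeds $m$ when $m=k/\sqrt{b(b-a)}$. So everything rests on the step you left to \cite{FPJH25}.

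The missing idea is a two-line estimate. Put $u=\frac{2(x-b)}{b-a}\ge 0$, so $y(x)=-1-u$ and $|T_k(y(x))|=T_k(1+u)=\cosh\bigl(k\operatorname{arcosh}(1+u)\bigr)\le e^{k\sqrt{2u}}$. This uses $\operatorname{arcosh}(1+u)\le\sqrt{2u}$: if $1+u=\cosh t$, then $u=2\sinh^2(t/2)\ge t^2/2$. By AM--GM, $x=b+\frac{b-a}{2}u\ge\sqrt{2b(b-a)u}$, so the hypothesis $m\sqrt{b(b-a)}\ge k$ gives $mx\ge k\sqrt{2u}$. Hence $e^{-mx}T_k(1+u)\le 1$ for all $x\ge b$, which is exactly what is needed. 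The bound $\operatorname{arcosh}(1+u)\le\sqrt{2u}$ is sharp as $u\to 0$. That is why it succeeds where your bound $k\ln\frac{4(x-a)}{b-a}$ loses a constant at $x\approx b$.

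For \cref{eq:outcond3}, your monomial-expansion route differs from the paper's. The paper applies the three-term recurrence to $M_j=\|T_j(\alpha-\beta u)\|_1$, with $u=x/b$, $\alpha\le 3$, $\beta\le 4$, to get $M_{j+1}\le 14M_j+M_{j-1}$ and hence $M_j\le 15^j$. Your route also reaches the constant $15$, but not via $\sum_j|t_j|\le(1+\sqrt2)^k$, which gives about $16.9^k$. Instead, use $\binom{j}{\ell}3^{j-\ell}4^{\ell}\le 7^j$ together with $\sum_j|t_j|7^j=\tfrac12\bigl[(7+\sqrt{50})^k+(7-\sqrt{50})^k\bigr]\le(7+\sqrt{50})^k<15^k$. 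This holds because the absolute-coefficient polynomial of $T_k$ evaluated at $s$ equals $|T_k(is)|$. This part is only an unfinished computation, unlike the $x>b$ case.
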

\begin{proof}
    See \cref{app:proof-lemma-poly}.
\end{proof}

\section{The Tester} \label{sec:tester}

Set constants $C_0 = 2^{26}$ and $L_0 = 2^{10}$.
The choice of constants is for simplicity, and we did not try to optimize them. 
Let 
\[
L = \log\rbra{en}, \qquad m = \ceil*{\frac{C_0n}{\varepsilon L}}. 
\]
Let $M \sim \operatorname{Poi}\rbra{m}$ and let $X_1, X_2, \dots, X_M \sim p$ be $M$ independent samples from $p$. 
Let
\[
N_i = \sum_{j=1}^M \mathbf{1}_{\cbra{X_j = i}}. 
\]
Note that $N_i \sim \operatorname{Poi}\rbra{m p_i}$ are independent. 
Our tester considers two cases. 
\begin{enumerate}
    \item $L < L_0$. 
    Let $\widetilde{S} = \widehat{S}^{\textup{MLE}}$. 
    \item $L \geq L_0$. 
    Let
    \[
    a_0 = \frac{\varepsilon}{8n}, \qquad b = \frac{64L}{m}, \qquad J = \floor*{\log_2\rbra*{\frac{b}{2a_0}}}.
    \]
    For each $0 \leq j \leq J$, consider the Chebyshev polynomial estimator
    \[
    \widehat{S}^{\textup{Cheb}}_j = \sum_i f_{a_j, b, k_j, m}\rbra{N_i}, \qquad a_j = \frac{2^{j-3} \varepsilon}{n}, \qquad k_j = \ceil*{\sqrt{\frac{b}{a_j}}\log\rbra*{\frac{2}{\delta_j}}}, \qquad \delta_j = 2^{-j-2}. 
    \]
    Finally, set
    \[
    \widetilde{S} = \max\cbra*{ \widehat{S}^{\textup{MLE}}, \max_{0 \leq j \leq J} \frac{\widehat{S}^{\textup{Cheb}}_j - \delta_j n}{1 + \delta_j} }.
    \]
\end{enumerate}

The tester outputs ``Yes'' if $\widetilde{S} \leq n$ and ``No'' otherwise. 
Take $\widehat{S} = \min\cbra{\max\cbra{\ceil{\widetilde{S}}, 1}, n}$ as a pretty good lower bound on support size.

\section{Correctness}

\begin{lemma} \label{lemma:basic-properties}
    Suppose $L \geq L_0$.
    For every $0 \leq j \leq J$, 
    \begin{align}
        & \rbra{1-\delta_j} \min\cbra*{\frac{x}{a_j}, 1} \leq \E_{X \sim \operatorname{Poi}\rbra{mx}}\sbra*{ f_{a_j, b, k_j, m}\rbra{X} } \leq 1 + \delta_j, \qquad 0 \leq x \leq 1, \label{eq:deltaj} \\
        & \Abs*{ f_{a_j, b, k_j, m} }_\infty \leq 4 n^{1/16}. \label{eq:fnorm}
    \end{align}
    \cref{eq:deltaj} further gives 
    \begin{equation} \label{eq:ESj}
        \rbra{1-\delta_j} \sum_{i} \min\cbra*{\frac{p_i}{a_j}, 1} \leq \E\sbra*{ \widehat{S}^{\textup{Cheb}}_j } \leq \rbra{1 + \delta_j} \abs{\supp\rbra{p}}.
    \end{equation}
    If $\E\sbra{\widehat{S}^{\textup{MLE}}} \leq 2n$, then 
    \begin{equation} \label{eq:prob-union}
    \Pr\sbra*{ \abs*{\widehat{S}^{\textup{Cheb}}_j - \E\sbra{\widehat{S}^{\textup{Cheb}}_j} } \leq \delta_j n, ~ 0 \leq j \leq J } \geq 0.99 .
    \end{equation}
\end{lemma}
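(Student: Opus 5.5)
Every claim will follow from \cref{lemma:poly} plus a routine variance computation.

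\emph{Step 1: verify the hypotheses of \cref{lemma:poly}.} Since $j \le J = \floor{\log_2(b/(2a_0))}$, we have $a_j = 2^j a_0 \le b/2$, hence $b - a_j \ge b/2$ and $\sqrt{b(b-a_j)} \ge b/\sqrt2$. Also $b = 64L/m \le 1$ because $m \ge C_0 n/(\varepsilon L) \ge 64L$ when $L \ge L_0$. For the degree, $a_j \ge a_0$ gives $\sqrt{b/a_j} \le \sqrt{b/a_0} = \sqrt{512 L n/(\varepsilon m)} \le \sqrt{512 L^2/C_0} = L/2^{9.5}$, and $\log(2/\delta_j) = (j+3)\log 2 \lesssim \log L$. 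Hence $k_j \le \frac{L}{2^{9.5}} \log(8\cdot 2^J)+1$, which is at most a small constant times $L$, say $k_j \le L/64$. Then $m\sqrt{b(b-a_j)} \ge mb/\sqrt2 = 64L/\sqrt2 \ge k_j$.

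\emph{Step 2: the bounds on $\rho$.} Let $x_0 = \frac{b+a_j}{b-a_j} = 1 + \frac{2a_j}{b-a_j}$. Using $T_k(x) \ge \frac12 (x+\sqrt{x^2-1})^k$ and $x_0 + \sqrt{x_0^2-1} \ge 1 + 2\sqrt{a_j/b} \ge e^{\sqrt{a_j/b}}$ (the latter for $a_j/b \le 1/2$), we obtain
\[
\rho \le 2 e^{-k_j\sqrt{a_j/b}} \le 2 \cdot \frac{\delta_j}{2} = \delta_j .
\]
Then \cref{eq:outcond2} gives \cref{eq:deltaj}. Summing \cref{eq:deltaj} over $i$ gives \cref{eq:ESj}: since $f(0)=0$, each $i \notin \supp(p)$ contributes exactly $0$, which yields the upper bound $(1+\delta_j)\abs{\supp(p)}$.

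\emph{Step 3: bound $\Abs{f}_\infty$.} By \cref{eq:outcond3},
\[
\abs{c_\ell}\,\ell!/m^\ell \le 15^{k}\rbra*{\frac{k}{bm}}^\ell = 15^k \rbra*{\frac{k}{64L}}^\ell \le 15^k .
\]
The step I expect to be the main constant-chasing obstacle is checking that $15^{k_j} \le 3n^{1/16}$. This needs $k_j \log 15 \le L/16$, i.e.\ $k_j \lesssim L/44$. From Step 1, $k_j \le \frac{L}{2^{9.5}}\log(8 b/a_0) + 1$ with $\log(b/a_0) \le \log L + O(1)$. Since $\log L / 2^{9.5} \ll 1/44$ is needed for all $L \ge L_0$, this is false for huge $L$. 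I would therefore redo the bound more carefully: $\sqrt{b/a_j}\,\log(2/\delta_j) = \sqrt{b/a_0}\,2^{-j/2}(j+3)\log 2$, which is maximized at small $j$, giving $k_j \le 3\log 2\cdot L/2^{9.5}+1 \le L/200$. That suffices, and yields $\Abs{f}_\infty \le 1 + 15^{k_j} \le 4n^{1/16}$.

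\emph{Step 4: concentration \cref{eq:prob-union}.} The $N_i$ are independent, so $\Var[\widehat{S}^{\textup{Cheb}}_j] = \sum_i \Var[f(N_i)]$. Split the elements by $N_i$.
\begin{itemize}
\item Where $N_i = 0$, $f$ vanishes. So for each $i$, $\Var[f(N_i)] \le \E[f(N_i)^2] \le \Abs{f}_\infty^2 \Pr[N_i \ge 1] = \Abs{f}_\infty^2 (1-e^{-mp_i})$.
\item Summing over $i$ gives $\Var \le 16 n^{1/8}\, \E[\widehat{S}^{\textup{MLE}}] \le 32 n^{9/8}$.
\end{itemize}
By Chebyshev, $\Pr[\abs{\widehat{S}^{\textup{Cheb}}_j - \E[\widehat{S}^{\textup{Cheb}}_j]} > \delta_j n] \le 32 n^{1/8}/(\delta_j^2 n) = 512\cdot 4^{j} n^{-7/8}$. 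Summing over $j \le J$ gives $O(4^J n^{-7/8})$, with $4^J \le (b/2a_0)^2 = O(L^2)$. For $L \ge 2^{10}$, $n$ is astronomically large, so the total is at most $0.01$, and a union bound finishes the proof.
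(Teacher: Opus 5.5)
Your proposal is correct and follows the same route as the paper. You check the hypotheses of \cref{lemma:poly}, bound $\rho_j \le \delta_j$ from the growth of $T_k$, and bound $\Abs{f}_\infty \le 1 + 15^{k_j}$ using $k_j = O(L)$; you rightly noticed that this needs the $2^{-j/2}$ factor in $\sqrt{b/a_j}$, since the crude bound via $a_j \ge a_0$ in your Step 1 does not give $k_j \le L/64$. You then finish with $\Var \le \Abs{f}_\infty^2\, \E[\widehat{S}^{\textup{MLE}}]$, Chebyshev, and a union bound. One arithmetic slip: $\sqrt{512/C_0} = 2^{-8.5}$, not $2^{-9.5}$, so you get $k_j \le L/148$ rather than $L/200$, which is still well below the roughly $L/44$ needed for $15^{k_j} \le 3n^{1/16}$ and the $L/64$ bound the paper uses.
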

\begin{proof}
    It can be verified that the tuple $\rbra{a, b, k, m} \coloneqq \rbra{a_j, b, k_j, m}$ satisfies the conditions required in \cref{lemma:poly}: $m\sqrt{b\rbra{b-a_j}} \geq k_j$ and $a_j \leq b/2$. 
    \cref{eq:outcond2} in \cref{lemma:poly} gives 
    \[
    \rbra{1-\rho_j} \min\cbra{x/a_j, 1} \leq \E_{X \sim \operatorname{Poi}\rbra{mx}} \sbra*{f_{a,b,k,m}\rbra{X}} \leq 1 + \rho_j,
    \]
    which gives \cref{eq:deltaj} by noting that
    \[
    \rho_j = \rbra*{ \cosh\rbra*{2k_j \operatorname{artanh}\rbra*{\sqrt{\frac{a_j}{b}}}} }^{-1} \leq 2e^{-2k_j \sqrt{a_j/b}} \leq \frac{\delta_j^2}{2} \leq \delta_j.
    \]
    Denote $P_{a_j, b, k_j}\rbra{x} = \sum_{\ell=0}^{k_j} c_\ell x^\ell$.
    Using \cref{eq:outcond3} in \cref{lemma:poly}, \cref{eq:fnorm} is obtained by noting that for $1 \leq \ell \leq k_j$, 
    \[
    \abs*{f_{a_j, b, k_j, m} \rbra{\ell}} \leq 1 + \frac{\abs{c_\ell} \ell!}{m^{\ell}} \leq 1 + 15^{k_j} \rbra*{\frac{k_j}{mb}}^\ell \leq 1 + 15^{k_j} \leq 4n^{1/16},
    \]
    where we use that fact that $k_j < L/64$.

    Since $N_i$ are independent and $f_{a_j, b, k_j, m}\rbra{0} = 0$, we have
    \[
    \Var\sbra*{ \widehat{S}^{\textup{Cheb}}_j } \leq \Abs*{f_{a_j, b, k_j, m}}_\infty^2 \sum_i \Pr\sbra*{N_i > 0} \leq 16 n^{1/8} \E\sbra*{\widehat{S}^{\textup{MLE}}}.
    \]
    Note that this argument is also valid on a countable alphabet of $i$. 
    If $\E\sbra{\widehat{S}^{\textup{MLE}}} \leq 2n$, then, by Chebyshev's inequality, for any $0 \leq j \leq J$, 
    \[
    \Pr\sbra*{ \abs*{\widehat{S}^{\textup{Cheb}}_j - \E\sbra*{\widehat{S}^{\textup{Cheb}}_j}} > \delta_j n } \leq 32 n^{-7/8} \delta_j^{-2}.
    \]
    \cref{eq:prob-union} is obtained by the union bound and by noting that 
    \[
    \sum_{j=0}^J 32 n^{-7/8} \delta_j^{-2} < 64 n^{-7/8} \delta_J^{-2} \leq 64 n^{-7/8} \cdot \frac{L^4}{2^{32}} < 0.01,
    \]
    where we use $\delta_J^{-1} \leq L^2/2^{16}$. 
\end{proof}

\begin{lemma} \label{lemma:test-prob}
    If $\abs{\supp\rbra{p}} \leq n$, then 
    \begin{equation} \label{eq:prob1}
    \Pr\sbra*{ \widetilde{S} > \abs{\supp\rbra{p}} } < 0.01.
    \end{equation}
    For $1 \leq r \leq n$, if $\Delta_r\rbra{p} \geq \varepsilon$, then
    \begin{equation} \label{eq:prob2}
    \Pr\sbra*{ \widetilde{S} \leq r } < 0.01.
    \end{equation}
\end{lemma}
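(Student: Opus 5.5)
We handle the two cases of the tester separately. When $L < L_0$, we have $\widetilde{S} = \widehat{S}^{\textup{MLE}}$ and $n \le e^{L_0}$ is a constant. Part \cref{eq:prob1} is then trivial: the MLE counts only elements with $N_i > 0$, all of which lie in $\supp(p)$, so $\widehat{S}^{\textup{MLE}} \le \abs{\supp(p)}$ deterministically. For \cref{eq:prob2} we invoke \cref{eq:prob-mle}, which gives $\Pr[\widehat{S}^{\textup{MLE}} \le r] \le \frac{r+1}{m\varepsilon} \le \frac{2n L}{C_0 n} = \frac{2L}{C_0} < 0.01$, since $L < 2^{10}$ and $C_0 = 2^{26}$.

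Now suppose $L \ge L_0$. For \cref{eq:prob1}, assume $\abs{\supp(p)} \le n$. The MLE term is at most $\abs{\supp(p)}$ deterministically. By \cref{eq:basic}, $\E[\widehat{S}^{\textup{MLE}}] \le n \le 2n$, so \cref{eq:prob-union} applies: with probability at least $0.99$, every $j$ satisfies $\widehat{S}^{\textup{Cheb}}_j \le \E[\widehat{S}^{\textup{Cheb}}_j] + \delta_j n$. On this event, \cref{eq:ESj} gives $\widehat{S}^{\textup{Cheb}}_j \le (1+\delta_j)\abs{\supp(p)} + \delta_j n$, and hence $\frac{\widehat{S}^{\textup{Cheb}}_j - \delta_j n}{1+\delta_j} \le \abs{\supp(p)}$ for all $j$. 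So $\widetilde{S} \le \abs{\supp(p)}$ except with probability $0.01$. To get the strict inequality $< 0.01$, we use the slack in the union-bound sum from the proof of \cref{lemma:basic-properties}, which is strictly below $0.01$.

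For \cref{eq:prob2}, fix $r \le n$ with $\Delta_r(p) \ge \varepsilon$, and split on $\E[\widehat{S}^{\textup{MLE}}]$.
\begin{itemize}
\item If $\E[\widehat{S}^{\textup{MLE}}] > 2n$, then by \cref{eq:basic} and Chebyshev's inequality, $\Pr[\widehat{S}^{\textup{MLE}} \le r] \le \Pr[\widehat{S}^{\textup{MLE}} \le \E - n] \le \E/(\E - n)^2 \le 4/n$, which is tiny because $n \ge e^{L_0-1}$.
\item Otherwise, we may again use the concentration event \cref{eq:prob-union}. Sort the probabilities as $p_{(1)} \ge p_{(2)} \ge \cdots$, so that $\Delta_r(p) = \sum_{i > r} p_{(i)} \ge \varepsilon$.
\begin{itemize}
\item If $p_{(r)} < a_0 = \varepsilon/(8n)$, then $p_{(i)} < a_0$ for all $i > r$, and we fall back on \cref{eq:prob-mle}: $\Pr[\widehat{S}^{\textup{MLE}} \le r] \le \frac{2n}{m\varepsilon} \le \frac{2L}{C_0}$. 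This bound is not small when $L$ is large, so this sub-case needs more care. Instead we use $j = 0$. With $\delta_0 = 1/4$, $\sum_i \min\{p_i/a_0, 1\} \ge r + \varepsilon/a_0 = r + 8n$ (taking $\min = 1$ on the top $r$ elements). Then $\E[\widehat{S}^{\textup{Cheb}}_0] \ge \frac34 (r + 8n)$, and on the concentration event $\frac{\widehat{S}^{\textup{Cheb}}_0 - n/4}{5/4} \ge \frac{\frac34(r+8n) - n/2}{5/4} > r$ for $r \le n$.
\item If $p_{(r)} \ge a_0$, pick $j$ with $a_j \le p_{(r)} < 2a_j$ when $p_{(r)} < 2a_J$. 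Then $\sum_i \min\{p_i/a_j, 1\} \ge r + \varepsilon/a_j = r + 2^{3-j} n$, so
\[
\E[\widehat{S}^{\textup{Cheb}}_j] \ge (1-\delta_j)\bigl(r + 2^{3-j} n\bigr).
\]
With $\delta_j = 2^{-j-2}$, we need $(1-\delta_j)(r + 2^{3-j} n) - 2\delta_j n > (1+\delta_j) r$, that is, $2^{3-j} n (1-\delta_j) > 2\delta_j (n + r) \le 2^{-j} n$. This holds comfortably.
\end{itemize}
\end{itemize}

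The remaining sub-case, $p_{(r)} \ge 2a_J \approx b/2$, is where I expect the main obstacle. Here all top-$r$ elements have mass at least about $32L/m$, so each has $\Pr[N_i = 0] \le e^{-32L} \le n^{-32}$, and a union bound shows the top $r$ elements are all seen. We still need one extra element beyond them. Since $\sum_{i>r} p_{(i)} \ge \varepsilon$, the expected number of seen tail elements is $\sum_{i>r}(1 - e^{-mp_{(i)}})$. This is at least about $\min\{m\varepsilon, 1\}/2$-ish, but it is small only if the tail is concentrated, which is impossible because tail elements have mass at most $p_{(r)}$. The simplest route is to argue that the probability that no tail element is seen is $e^{-m\varepsilon} \le e^{-C_0 n / L}$, which is negligible. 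Together with all top-$r$ elements being seen, this gives $\widehat{S}^{\textup{MLE}} \ge r + 1$. In fact, this argument (all top-$r$ elements seen, plus one tail hit) may be the cleanest uniform way to handle whenever $p_{(r)}$ is large, with the Chebyshev sub-testers handling the small-$p_{(r)}$ regime. Finally, sum the failure probabilities in each branch and check that they stay below $0.01$.
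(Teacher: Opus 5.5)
Your proposal is the paper's proof almost step for step. It uses the same split on $L<L_0$ and the same concentration event for the first claim; your remark that strictness comes from the union-bound sum in \cref{lemma:basic-properties} being strictly below $0.01$ fills a point the paper glosses over. It also uses the same Chebyshev argument when $\mu:=\E[\widehat{S}^{\textup{MLE}}]>2n$, and the same three regimes for $p_{(r)}$:
\begin{itemize}
\item the sub-tester $j=0$ below $a_0$;
\item the dyadic $j$ with $a_j\le p_{(r)}<2a_j$ in the middle;
\item ``all top-$r$ elements seen plus one tail hit'' for $p_{(r)}\ge 2a_J$. Since $2a_J\in(b/2,b]$, this gives $mp_{(r)}>32L$ there, as you need.
\end{itemize}
However, three of the inequalities you write are false as stated. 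Each one is repaired by the available slack.

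\begin{enumerate}
\item In the sub-case $p_{(r)}<a_0$ you claim $\sum_i\min\{p_i/a_0,1\}\ge r+8n$ by ``taking $\min=1$ on the top $r$ elements''.
\begin{itemize}
\item But here the $r$-th term equals $p_{(r)}/a_0<1$, so the top $r$ elements contribute strictly less than $r$. For instance, when $\Delta_r(p)=\varepsilon$ the total is strictly below $r+8n$.
\item The fix is to drop the top-$r$ terms. Since $p_{(i)}<a_0$ for all $i>r$, the tail alone gives $\Delta_r(p)/a_0\ge 8n$.
\item Hence $\E[\widehat{S}^{\textup{Cheb}}_0]\ge 6n$, and on the concentration event $\widetilde{S}\ge (6n-n/2)/(5/4)=22n/5>r$.
\end{itemize}
\item In the middle sub-case you claim the tail contributes at least $\varepsilon/a_j$.
\begin{itemize}
\item Tail elements can satisfy $a_j\le p_{(i)}\le p_{(r)}<2a_j$. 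There the minimum is capped at $1$, so up to a factor $2$ is lost.
\item The correct bound uses $\min\{p_{(i)}/a_j,1\}\ge p_{(i)}/p_{(r)}$ for $i>r$, which holds because $a_j\le p_{(r)}$. This gives $\Delta_r(p)/p_{(r)}>\varepsilon/(2a_j)=2^{2-j}n$.
\item Your closing inequality survives: $(1-\delta_j)2^{2-j}n\ge 3\cdot 2^{-j}n>2^{-j}n\ge 2\delta_j(n+r)$.
\end{itemize}
\item In the case $\mu>2n$, the step $\Pr[\widehat{S}^{\textup{MLE}}\le \mu-n]\le \mu/(\mu-n)^2$ is not what Chebyshev gives.
\begin{itemize}
\item A deviation of $n$ only yields $\Var[\widehat{S}^{\textup{MLE}}]/n^2\le \mu/n^2$, which need not be small.
\item Instead use the event $\{\widehat{S}^{\textup{MLE}}\le n\}\supseteq\{\widehat{S}^{\textup{MLE}}\le r\}$, i.e.\ a deviation of $\mu-n$.
\item This gives $\mu/(\mu-n)^2<2/n$, because $t\mapsto t/(t-n)^2$ is decreasing for $t>n$.
\end{itemize}
\end{enumerate}

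With these three corrections your argument is complete and coincides with the paper's.
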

\begin{proof}
    If $L < L_0$, then $\widetilde{S} = \widehat{S}^{\textup{MLE}}$. 
    Then, $\widetilde{S} = \widehat{S}^{\textup{MLE}} \leq \abs{\supp\rbra{p}}$ holds with certainty, which gives \cref{eq:prob1}. 
    For $1 \leq r \leq n$, if $\Delta_r\rbra{p} \geq \varepsilon$, \cref{eq:prob-mle} in \cref{lemma:mle} gives \cref{eq:prob2}:
    \[
    \Pr\sbra*{ \widetilde{S} \leq r } = \Pr\sbra*{ \widehat{S}^{\textup{MLE}} \leq r } \leq \frac{r+1}{m\varepsilon} \leq \frac{2n}{m\varepsilon} \leq \frac{2L}{C_0} < 0.01. 
    \]

    In the rest, we assume that $L \geq L_0$. 
    To see \cref{eq:prob1}, since $\E\sbra{\widehat{S}^{\textup{MLE}}} \leq \abs{\supp\rbra{p}} \leq n \leq 2n$, \cref{eq:prob-union} in \cref{lemma:basic-properties} gives that it holds with probability at least $0.99$ that $\abs{\widehat{S}^{\textup{Cheb}}_j - \E\sbra{\widehat{S}^{\textup{Cheb}}_j} } \leq \delta_j n$ for all $0 \leq j \leq J$, in which case \cref{eq:ESj} in \cref{lemma:basic-properties} gives
    \[
    \frac{\widehat{S}^{\textup{Cheb}}_j - \delta_j n}{1 + \delta_j} \leq \frac{ \E\sbra{\widehat{S}^{\textup{Cheb}}_j} }{1 + \delta_j} \leq \abs{\supp\rbra{p}}.
    \]

    To show \cref{eq:prob2}, we consider two cases.
    \begin{enumerate}
        \item $\E\sbra{\widehat{S}^{\textup{MLE}}} > 2n$. 
        Since $\widetilde{S} \geq \widehat{S}^{\textup{MLE}}$, \cref{eq:basic} in \cref{lemma:mle}, together with Chebyshev's inequality, gives
        \[
        \Pr\sbra*{ \widetilde{S} \leq r } \leq \Pr\sbra*{ \widehat{S}^{\textup{MLE}} \leq n } \leq \frac{\Var\sbra{\widehat{S}^{\textup{MLE}}}}{\rbra{\E\sbra{\widehat{S}^{\textup{MLE}}} - n}^2} \leq \frac{\E\sbra{\widehat{S}^{\textup{MLE}}}}{\rbra{\E\sbra{\widehat{S}^{\textup{MLE}}} - n}^2} \leq \frac{2}{n} < 0.01.
        \]
        \item $\E\sbra{\widehat{S}^{\textup{MLE}}} \leq 2n$.
        Let $p_{\rbra{1}} \geq p_{\rbra{2}} \geq p_{\rbra{3}} \geq \dots$ be the decreasing probability weights of the distribution $p$. 
        We consider three cases on $p_{\rbra{r}}$.
        \begin{enumerate}
            \item[2.1.] $p_{\rbra{r}} \geq b/2$. Since $\widetilde{S} \geq \widehat{S}^{\textup{MLE}}$, the union bound gives
            \[
            \Pr\sbra*{ \widetilde{S} \leq r } \leq \Pr\sbra*{ \widehat{S}^{\textup{MLE}} \leq r } \leq \Pr\sbra*{ \exists 1 \leq i \leq r \colon N_{\rbra{i}} = 0 } + \Pr\sbra*{ \forall i > r \colon N_{\rbra{i}} = 0 },
            \]
            where $N_{\rbra{i}}$ is the count of the element with probability $p_{\rbra{i}}$.
            For the first term, 
            \[
            \Pr\sbra*{ \exists 1 \leq i \leq r \colon N_{\rbra{i}} = 0 } \leq \sum_{i=1}^r \Pr\sbra*{ N_{\rbra{i}} = 0 } = \sum_{i=1}^r e^{-mp_{\rbra{i}}} \leq r e^{-mp_{\rbra{r}}}. 
            \]
            For the second term, 
            \[
            \Pr\sbra*{ \forall i > r \colon N_{\rbra{i}} = 0 } = \Pr\sbra*{ \sum_{i > r} N_{\rbra{i}} = 0 } = e^{-m \Delta_r\rbra{p}},
            \]
            where note that $\sum_{i > r} N_{\rbra{i}} \sim \operatorname{Poi}\rbra{m \sum_{i > r} p_{\rbra{i}}} = \operatorname{Poi}\rbra{m \Delta_r\rbra{p}}$.
            Therefore, 
            \[
            \Pr\sbra*{ \widetilde{S} \leq r } \leq re^{-mp_{\rbra{r}}} + e^{-m\Delta_r\rbra{p}} \leq re^{-mb/2} + e^{-m\varepsilon} < 0.01.
            \]
            \item[2.2.] $p_{\rbra{r}} < a_0$. 
            By \cref{eq:ESj} in \cref{lemma:basic-properties}, 
            \[
            \E\sbra*{\widehat{S}^{\textup{Cheb}}_0} \geq \frac{1-\delta_0}{a_0} \sum_{i > r} p_{\rbra{i}} = \frac{1-\delta_0}{a_0} \Delta_r\rbra{p} \geq \frac{1-\delta_0}{a_0} \varepsilon = 6n.
            \]
            By \cref{eq:prob-union} in \cref{lemma:basic-properties}, it holds with probability at least $0.99$ that $\abs{\widehat{S}^{\textup{Cheb}}_0 - \E\sbra{\widehat{S}^{\textup{Cheb}}_0} } \leq \delta_0 n$, in which case 
            \[
            \widetilde{S} \geq \frac{\widehat{S}^{\textup{Cheb}}_0 - \delta_0 n}{1 + \delta_0} \geq \frac{ \E\sbra{\widehat{S}^{\textup{Cheb}}_0} - 2\delta_0 n}{1 + \delta_0} \geq \frac{6n - n/2}{1 + 1/4} = \frac{22}{5}n > r,
            \]
            which yields \cref{eq:prob2}.

            \item[2.3.] $a_0 \leq p_{\rbra{r}} < b/2$. 
            Let $0 \leq j \leq J$ such that $a_j \leq p_{\rbra{r}} < 2a_j$. 
            By \cref{eq:ESj} in \cref{lemma:basic-properties}, 
            \[
            \E\sbra*{\widehat{S}^{\textup{Cheb}}_j} \geq \rbra{1-\delta_j} \sum_{i} \min\cbra*{\frac{p_i}{a_j}, 1} \geq \rbra{1-\delta_j} \rbra*{r + \frac{\Delta_r\rbra{p}}{p_{\rbra{r}}}} \geq \rbra{1-\delta_j} \rbra*{r + \frac{\varepsilon}{2a_j}}.
            \]
            By \cref{eq:prob-union} in \cref{lemma:basic-properties}, it holds with probability at least $0.99$ that $\abs{\widehat{S}^{\textup{Cheb}}_j - \E\sbra{\widehat{S}^{\textup{Cheb}}_j} } \leq \delta_j n$, in which case 
            \begin{align*}
            \widetilde{S} 
            & \geq \frac{\widehat{S}^{\textup{Cheb}}_j - \delta_j n}{1 + \delta_j} \geq \frac{ \E\sbra{\widehat{S}^{\textup{Cheb}}_j} - 2\delta_j n}{1 + \delta_j} \geq \frac{\rbra{1-\delta_j} \rbra{r + \varepsilon/2a_j} - 2\delta_jn}{1 + \delta_j} \\
            & = r + \frac{\delta_j \rbra{\rbra{14-16\delta_j}n - 2r}}{1+\delta_j} > r,
            \end{align*}
            where we use $\delta_j \leq 1/4$ and $r \leq n$. 
            This yields \cref{eq:prob2}.
        \end{enumerate}
    \end{enumerate}
\end{proof}

\begin{proof}[Proof of \cref{thm:support-size-testing}]

When $\abs{\supp\rbra{p}} \leq n$, \cref{lemma:test-prob} gives that $\Pr\sbra{\widetilde{S} > n} \leq \Pr\sbra{\widetilde{S} > \abs{\supp\rbra{p}}} < 0.01$.
When $\Delta_n\rbra{p} \geq \varepsilon$, \cref{lemma:test-prob} gives $\Pr\sbra{\widetilde{S} \leq n} < 0.01$. 
\end{proof}

\begin{proof}[Proof of \cref{thm:pretty}]
When $\abs{\supp\rbra{p}} \geq n$, $\widehat{S} \leq n \leq \abs{\supp\rbra{p}}$ holds with certainty. 
When $\abs{\supp\rbra{p}} < n$, \cref{lemma:test-prob} gives $\Pr\sbra{\widehat{S}>\abs{\supp\rbra{p}}}\leq0.01$.
Therefore, $\Pr\sbra{\widehat{S} \leq \abs{\supp\rbra{p}}} \geq 0.99$. 

For the left hand side of the inequality, let $r = \min\cbra{\operatorname{eff}_\varepsilon\rbra{p}, n} - 1$. 
If $r = 0$, then $\widehat{S} \geq 1 = r + 1$ with certainty. 
If $1 \leq r \leq n - 1$, then $\Delta_r\rbra{p} > \varepsilon$. 
\cref{lemma:test-prob} gives $\Pr\sbra{ \widetilde{S} \leq r } < 0.01$, which is $\Pr\sbra{\widehat{S} \geq \min\cbra{\operatorname{eff}_\varepsilon\rbra{p}, n}} \geq 0.99$. 

Combining the above with the union bound gives $\Pr\sbra{\min\cbra{\operatorname{eff}_\varepsilon\rbra{p}, n} \leq \widehat{S} \leq \abs{\supp\rbra{p}}} \geq 0.98$. 
\end{proof}

\section*{Acknowledgment}

The author used Large Language Models as AI-assisted research and writing tools throughout the
preparation of this paper. The tools were used to help brainstorm ideas and explore proof
strategies. 
All final mathematical claims, algorithms, proofs, citations, and wording were written by the author. 
The author assumes responsibility for all content of the paper.

The author thanks Jinge Bao, Minbo Gao, and Penghui Yao for communication regarding the related work \cite{BGY26}. 

\addcontentsline{toc}{section}{References}

\bibliographystyle{alphaurl}
\bibliography{main}

\appendix

\section{Proof of Lemma \ref{lemma:poly}} \label{app:proof-lemma-poly}

\begin{proof}
    By definition, direct calculation gives \cref{eq:def-Q}:
    \begin{align*}
        Q\rbra{x} 
        & = \sum_{\ell = 0}^\infty \frac{e^{-mx} \rbra{mx}^\ell}{\ell!} f\rbra{\ell} \\
        & = \sum_{\ell = 1}^k \frac{e^{-mx} \rbra{mx}^\ell}{\ell!} \rbra*{1 + \frac{c_\ell \ell!}{m^\ell} } + \sum_{\ell = k+1}^\infty \frac{e^{-mx} \rbra{mx}^\ell}{\ell!} \\
        & = \sum_{\ell = 1}^\infty \frac{e^{-mx} \rbra{mx}^\ell}{\ell!} + \sum_{\ell = 1}^k \frac{e^{-mx} \rbra{mx}^\ell}{\ell!} \cdot \frac{c_\ell \ell!}{m^\ell} \\
        & = \rbra{1 - e^{-mx}} + e^{-mx} \rbra{P_{a,b,k}\rbra{x} + 1} \\
        & = 1 + e^{-mx} P_{a,b,k}\rbra{x}.
    \end{align*}

    Now we prove \cref{eq:outcond2}. 
    We consider three cases.
    \begin{enumerate}
        \item $0 \leq x \leq a$. 
        Since $T_k\rbra{x}$ is increasing and convex on $[1, +\infty)$, $P_{a,b,k}\rbra{x}$ is increasing and concave on $\sbra{0, a}$, which gives
    \[
    P_{a,b,k}\rbra{x} \geq \rbra*{1 - \frac{x}{a}} P_{a,b,k}\rbra{0} + \rbra*{\frac{x}{a}} P_{a,b,k}\rbra{a} = -1 + \frac{\rbra{1-\rho}x}{a}, \qquad 0 \leq x \leq a,
    \]
    where we use $P_{a,b,k}\rbra{0} = -1$ and $P_{a,b,k}\rbra{a} = -\rho$.
    Moreover, since $P_{a,b,k}\rbra{x} \leq 0$ and $e^{-mx} \leq 1$ for $0 \leq x \leq a$, \cref{eq:def-Q} gives
    \begin{equation} \label{eq:case1}
    \frac{\rbra{1-\rho}x}{a} \leq Q\rbra{x} \leq 1, \qquad 0 \leq x \leq a.
    \end{equation}
        \item $a \leq x \leq b$.
        Since $\abs{P_{a,b,k}\rbra{x}} \leq \rho$, \cref{eq:def-Q} gives
    \begin{equation} \label{eq:case2}
    1 - \rho \leq Q\rbra{x} \leq 1 + \rho.
    \end{equation}
        \item $x \geq b$. 
        Let $u = \frac{2\rbra{x-b}}{b-a} \geq 0$.
        Note that $\frac{b+a-2x}{b-a} = -1-u$. 
    Then, 
    \[
    T_k\rbra{1+u} = \cosh\rbra*{k \operatorname{arcosh}\rbra{1+u}} \leq e^{k \operatorname{arcosh}\rbra{1+u}} \leq e^{k\sqrt{2u}}. 
    \]
    On the other hand, the condition $m\sqrt{b\rbra{b-a}} \geq k$, together with the fact that $x = b + \frac{b-a}{2} u \geq \sqrt{2b\rbra{b-a}u}$, implies $mx \geq k\sqrt{2u}$.
    Therefore,
    \begin{equation} \label{eq:case3}
    \abs{Q\rbra{x} - 1} = \rho e^{-mx} T_k\rbra{1+u} \leq \rho e^{k\sqrt{2u}-mx} \leq \rho.
    \end{equation}
    \end{enumerate}
    Combining \cref{eq:case1,eq:case2,eq:case3} gives \cref{eq:outcond2}.

    Finally, we prove \cref{eq:outcond3}.
    Let $u = x/b$. 
    Then, 
    \[
    P\rbra{x} = P\rbra{bu} = -\rho T_k\rbra{\alpha - \beta u}, \qquad \alpha = \frac{a+b}{b-a} \leq 3, \qquad \beta = \frac{2b}{b-a} \leq 4.
    \]
    Let $M_j = \Abs{T_j\rbra{\alpha - \beta u}}_1$ be the absolute sum of the coefficients of $T_j\rbra{\alpha - \beta u}$ in terms of $u$. 
    Then, $M_0 = 1$, $M_1 \leq 7$, and $M_{j+1} \leq 14M_j + M_{j-1}$, which gives $M_j \leq 15^j$ for any integer $j \geq 0$. 
    Therefore, 
    \[
    \abs{c_\ell} b^\ell \leq \sum_{\ell = 0}^k \abs{c_\ell} b^\ell = \Abs{P\rbra{bu}}_1 = \rho M_k \leq 15^k,
    \]
    which gives \cref{eq:outcond3} using $\ell! \leq k^\ell$ for $0 \leq \ell \leq k$. 
\end{proof}

\end{document}